\documentclass[twocolumn,10pt,superscriptaddress,amsmath,amssymb,aps,pra,floatfix,]{revtex4-2}

\usepackage{graphicx}% Include figure files
\usepackage{bm}% bold math
\usepackage{times}
\usepackage{amsmath,amssymb}
\usepackage{amsthm}
\usepackage{enumerate}
\usepackage{multirow}
\usepackage{siunitx}
\usepackage{makecell}
\usepackage{physics}
\usepackage{longtable,array}
\usepackage[caption=false]{subfig}
\usepackage{tikz}
\usepackage{xcolor}
\usetikzlibrary{quantikz2}
\usepackage{tabularx}
\usepackage{soul}
\usepackage{xcolor}
\setstcolor{red}

\usepackage[colorlinks=true,linkcolor=blue,citecolor=red, linktocpage=true,breaklinks=true]{hyperref}

\newcommand{\eq}{\begin{equation}}
\newcommand{\en}{\end{equation}}
\newcommand{\eqa}{\begin{eqnarray}}
\newcommand{\ena}{\end{eqnarray}}
\newcommand{\pr}{\mathrm{Pr}}
\newcommand{\E}{\mathrm{E}}

\newtheorem{theorem}{Theorem} % 主计数器
\newtheorem{lemma}[theorem]{Lemma} % 与 theorem 共享计数器
\newtheorem{corollary}[theorem]{Corollary}
\newtheorem{definition}[theorem]{Definition}
\begin{document}

\title{Asymptotic bounds on quantum partial search algorithm and its applications to parallel search}% Force line breaks with \\

\author{Yan-Bo \surname{Jiang}}
\affiliation{School of Physics, Northwest University, Xi’an 710127, China}

\author{Xiao-Hui Wang}
%\email{xhwang@nwu.edu.cn}
\affiliation{School of Physics, Northwest University, Xi’an 710127, China}
\affiliation{Shaanxi Key Laboratory for Theoretical Physics Frontiers, Xi'an 710127, China}
\affiliation{Peng Huanwu Center for Fundamental Theory, Xi'an 710127, China}
\affiliation{Fundamental Discipline Research Center for Quantum Science and Technology of Shaanxi Province, Xi'an 710127, China}

\author{Kun \surname{Zhang}}
\email{kunzhang@nwu.edu.cn}
\affiliation{School of Physics, Northwest University, Xi’an 710127, China}
\affiliation{Shaanxi Key Laboratory for Theoretical Physics Frontiers, Xi'an 710127, China}
\affiliation{Peng Huanwu Center for Fundamental Theory, Xi'an 710127, China}
\affiliation{Fundamental Discipline Research Center for  Quantum Science and Technology of Shaanxi Province, Xi'an 710127, China}

\author{Vladimir \surname{Korepin}}
%\email{vladimir.korepin@stonybrook.edu}
\affiliation{C.N. Yang Institute for Theoretical Physics, Stony Brook University, New York 11794, USA}

\date{\today}

\begin{abstract}
Grover's algorithm provides a quadratic speedup over classical algorithms for searching an unstructured database and is known to be strictly optimal in oracle query complexity, with tight bounds on its success probability. Although the standard Grover search cannot be further accelerated in the full-search setting, a trade-off between accuracy and query complexity gives rise to the partial search problem. The Grover-Radhakrishnan-Korepin (GRK) algorithm is the standard and most extensively studied protocol for this task. In this work, we provide systematic numerical evidence that the GRK operator sequence gives the highest success probability in all examined cases, supporting it as the optimal ansatz among admissible compositions of global and local Grover operators. Guided by this numerically supported GRK ansatz, we derive an asymptotically tight upper bound on the maximal success probability within the GRK family and establish the corresponding lower bound on the minimal expected number of oracle queries. Furthermore, we investigate parallel quantum search within the partial-search framework. While a direct GRK-based parallelization does not outperform established parallel Grover schemes, we demonstrate that a hybrid strategy combining partial and full search protocols yields a strict, though subleading, improvement over the outer parallel Grover scheme. Our results clarify the fundamental limits of quantum partial search and its role in optimizing parallel quantum search algorithms.
\end{abstract}

\maketitle
\section{\label{sec:intro} Introduction}

Grover's algorithm locates a marked item in an unstructured database of size $N$ using only $\mathcal{O}(\sqrt{N})$ quantum oracle queries, thereby achieving a quadratic speedup over classical search algorithms that require $\mathcal{O}(N)$ queries \cite{grover1996fast,Grover1997needle}. Its oracle query complexity has been rigorously proven to be strictly optimal \cite{BBHT98,zalka1999grover}. Moreover, for any fixed number of oracle calls $k$, the exact tight upper bound on the achievable success probability is known \cite{Dohotaru2008ExactQL}. 

Operationally, Grover's algorithm consists of iterated applications of two reflection operators, which coherently rotate the system state toward the target state by amplifying its amplitude. This mechanism was later formalized within the broader framework of amplitude amplification \cite{grover1998quantum,brassard2000quantum}. Owing to the fundamental role of unstructured search and the generality of amplitude amplification, Grover's algorithm has found numerous applications, including global optimization \cite{Baritompa2005GroversQA}, Boolean satisfiability \cite{Alonso2022EngineeringTD}, and quantum machine learning \cite{biamonteQuantumMachineLearning2017a}. It has also been experimentally implemented across a variety of quantum computing platforms, providing empirical demonstrations of quantum speedup \cite{Chuang1998,zhang2021implementation,pokharel2023demonstration,pokharel2024better}.

Since Grover's algorithm is strictly optimal in oracle query complexity, any further acceleration of the full search task must necessarily involve a modification of the problem itself. A natural strategy is to trade accuracy for query complexity, leading to the formulation of the partial search problem. In this setting, the objective is not to identify the complete target string, but only a partial specification of it. For example, if the target is encoded in a 4-qubit state, one may seek to determine only the first two qubits. Equivalently, the database can be partitioned into disjoint blocks, and the task reduces to identifying the block that contains the target state. For this modified objective, the standard Grover search is no longer optimal. A more efficient protocol is given by the Grover-Radhakrishnan-Korepin (GRK) algorithm \cite{grover2004partial,korepin2005simple,Korepin2006opt}. 

The central idea of the GRK algorithm is the introduction of a local Grover operator, which restricts the diffusion operation to a subspace corresponding to individual blocks. By appropriately interleaving global (standard) and local Grover iterations, the algorithm amplifies the amplitude of the correct block more efficiently than a purely global search. The GRK framework has subsequently been extended to various generalizations, including binary partial search \cite{Korepin2007BinaryQS} and multi-target partial search \cite{Choi2007QuantumPS,zhong2009quantum,Zhang2017QuantumPS}.

The GRK algorithm proceeds in three stages: one first applies the global Grover operator $k_1$ times to the initial uniform superposition, followed by $k_2$ applications of the local Grover operator, and concludes with a single additional global Grover iteration \cite{grover2004partial,korepin2005simple,Korepin2006opt}. Because the global and local Grover operators do not commute, the ordering of these operations plays a crucial role in determining the overall performance of the algorithm. Comparative analyses of various operator sequences suggest that the global-local-global structure prescribed by GRK yields the best performance \cite{Korepin2006quest}. However, despite evidence supporting its optimality, a fully rigorous proof that this sequence is strictly optimal among all admissible compositions remains elusive. The difficulty stems from the underlying algebraic structure: both operators admit a representation within the $O(3)$ group \cite{korepin2006group}, and the interplay of rotations and reflections complicates a direct characterization of globally optimal sequences.

In this work, we systematically investigate the performance of quantum partial search under a fixed total number of oracle queries, denoted by $k$. By exhaustively enumerating all $2^k$ admissible compositions of global and local Grover operators, we provide strong numerical evidence that the GRK sequence maximizes the success probability for each fixed $k$ in all cases examined, including both small and large database sizes. This exhaustive analysis overcomes a limitation of previous studies, which examined only a restricted subset of operator orderings\cite{Korepin2006quest}.

Guided by the observed optimal structure, we derive an asymptotically tight analytical upper bound within the GRK family on the maximal success probability achievable with $k$ oracle calls, and establish a corresponding tight lower bound on the minimal expected number of oracle queries. These results characterize the fundamental efficiency limits of quantum partial search. Building on this optimal framework, we further investigate partial search in the context of parallel quantum search. We show that a straightforward GRK-based parallelization does not surpass established parallel Grover schemes \cite{Gingrich2000Generalized,Cremers2025Speeding}. Motivated by this observation, we introduce a hybrid parallel strategy that combines partial and full search protocols. We demonstrate that this hybrid scheme yields a strict, though subleading, improvement over the outer parallel Grover scheme.

%We demonstrate that this hybrid scheme achieves a strictly improved parallel efficiency compared with direct parallel Grover search, especially in the regime of a small number of quantum processors.

This paper is organized as follows. In Sec.~\ref{sec:quantum_search}, we review Grover's algorithm and the GRK algorithm and establish the necessary notation. In Sec.~\ref{sec:num_quest}, we present a numerical investigation of the optimality of GRK operator sequences, evaluating both the success probability and the expected number of oracle queries. Asymptotically tight analytical bounds for these quantities are derived in Sec.~\ref{sec:exact_bounds_partial}. In Sec.~\ref{sec:exact_bounds_parallel}, we study parallel search schemes based on the partial search framework and introduce a hybrid parallel strategy. The main results are summarized in Sec.~\ref{sec:conclusion}. Detailed numerical data supporting Sec.~\ref{sec:num_quest} are collected in Appendix~\ref{app:numerical_quest}, the comparison between the hybrid and outer parallel schemes is presented in Appendix \ref{sec:comparing_hybrid_outer}, and the comparison between the hybrid and inner parallel schemes is given in Appendix \ref{sec:comparing_hybrid_inner}.

\section{\label{sec:quantum_search} Quantum search algorithms}

In this section, we review Grover's algorithm for the full search problem and the GRK algorithm for the partial search problem in Secs. \ref{subsec:grover} and \ref{subsec:partial_search}, respectively.

\subsection{\label{subsec:grover} Grover's algorithm}

%\cite{grover1996fast,nielsenQuantumComputationQuantum2010}

Consider an unstructured database of size $N = 2^n$. The unstructured search problem is to locate the solution $t \in \{0,1\}^n$ by querying a Boolean oracle $f(x)$, which is defined as $f(t)=1$ and $f(x)=0$ for all $x \neq t$. Quantum search exploits quantum superposition. In the context of search, the algorithm has the initial state
\begin{equation}
\label{eq:initial_state_sn}
\ket{s_n} = \frac{1}{\sqrt N}\sum_{x=0}^{N-1}\ket{x},
\end{equation}
which represents an equal superposition of all states in the search space. The Boolean function $f(x)$ is then implemented as a unitary operator. Using the phase kickback technique \cite{nielsenQuantumComputationQuantum2010}, this yields the standard quantum oracle
\begin{equation}
O_t\ket{x}=(-1)^{f(x)}\ket{x},
\end{equation}
equivalent to $O_t=I-2\ket{t}\bra{t}$. Since the oracle operator alone cannot directly extract the target state $\ket{t}$ from the initial superposition $\ket{s_n}$, Grover's algorithm employs a second, target-independent operator known as the diffusion operator,
\begin{equation}
D_n=2\ket{s_n}\bra{s_n}-I.
\end{equation}
This operator is closely related to the $n$-qubit Toffoli gate and can be decomposed efficiently into single- and two-qubit gates \cite{BBCDMSSSW95}.

The Grover operator, which combines the quantum oracle $O_t$ and the diffusion operator $D_n$, is defined as
\begin{equation}
G_n=D_nO_t.
\end{equation}
The Grover operator amplifies the amplitude of the target state while suppressing the amplitudes of all non-target states, when applied to the initial state $\ket{s_n}$. Crucially, this process does not introduce any relative amplitude differences among the distinct non-target states. It is therefore convenient to introduce the normalized state $\ket{t^\perp} = \sum_{x\neq t}\ket{x}/{\sqrt{N-1}}$, assuming a unique target state. In the two-dimensional subspace spanned by $\ket{t}$ and $\ket{t^\perp}$, the initial state can be expressed as
\begin{equation}
\ket{s_n}=\sin\theta_1\ket{t}+\cos\theta_1\ket{t^\perp},
\end{equation}
where $\sin\theta_1=1/\sqrt{N}$. The action of $G_n$ corresponds to a rotation by an angle $2\theta_1$, with the matrix representation
\begin{equation}
G_n =
\begin{pmatrix}
\cos 2\theta_1 & -\sin 2\theta_1 \\
\sin 2\theta_1 & \cos 2\theta_1
\end{pmatrix}.
\end{equation}
After $k$ iterations, the probability of finding the target state $\ket{t}$ is
\begin{equation}
\label{eq:success_pr_of_Grover}
\pr_{x=t}(k)=\abs{\left\langle t\abs{G_n^k}s_n\right\rangle}^2=\sin^2((2k+1)\theta_1).
\end{equation}
The optimal number of oracle queries that maximizes the success probability is approximately
\begin{equation}
k \approx \left\lfloor \frac{\pi}{4\theta_1} \right\rfloor \approx \left\lfloor \frac{\pi}{4} \sqrt{N} \right\rfloor,
\end{equation}
showing that Grover's algorithm finds the target state with $\mathcal O(\sqrt{N})$ oracle queries, demonstrating a quadratic speedup over the classical $\mathcal O(N)$ requirement.

Given that the growth rate of the success probability diminishes as it approaches 100\%, terminating Grover's algorithm before $k = \left\lfloor \pi\sqrt{N}/4 \right\rfloor$ and restarting upon failure can yield a speedup, a strategy known as serial execution or punctuated quantum search algorithm \cite{BBHT98,Gingrich2000Generalized}. Suppose we measure the state after $k$ Grover iterations, where the success probability is $\pr_{x=t}(k)$. The probability of finding the target state after repeating this process $q$ times is $\pr(k)(1 - \pr(k))^{q-1}$. Hence, the expected number of iterations required to locate the target state is
\begin{equation}
\E_{x=t}(k)=k\sum_{q=1}^{\infty}q\pr(k)\left(1-\pr(k)\right)^{q-1}=\frac{k}{\Pr(k)}.
\end{equation}
The value $k_\text{min}$ that minimizes this expectation satisfies
\begin{equation}
\frac{\mathrm d}{\mathrm dk} \frac{k}{\sin^2((2k+1)\theta_1)} = 0,
\end{equation}
which is equivalent to the transcendental equation $\tan((2k_{\min}+1)\theta_1) = 4\theta_1 k_{\min}$. Numerically, one obtains $k_{\min} \approx 0.5828\sqrt{N}$, with a corresponding success probability $\pr_{x=t}(k_{\min}) \approx 0.8446$. The resulting minimal expected iteration number is
\begin{equation}
\label{eq:E_min_Grover}
\E^{\min}_{x=t}=\frac{k_{\min}}{\Pr(k_{\min})} \approx 0.69\sqrt{N},
\end{equation}
which is lower than the standard Grover expectation $\pi\sqrt{N}/4 \approx 0.7854\sqrt{N}$.

\subsection{\label{subsec:partial_search} Partial search algorithm}

In the full search problem, the goal is to locate the complete $n$-bit target string $t$. In the partial search setting, however, only partial information about $t$ is required. Specifically, the task is to find $t_1$ as $t$ is decomposed as $t = t_1 t_2$. Denote the bit lengths of $t_1$ and $t_2$ as $n-m$ and $m$, respectively. This is equivalent to partitioning the database into $K=2^{n-m}$ blocks with $b=2^m$ items in each block and identifying the block labeled by $t_1$. The GRK algorithm, which solves this partial search problem, employs an additional diffusion operator given by \cite{grover2004partial,korepin2005simple,Korepin2006opt}
\begin{equation}
D_{m} = I_{n-m}\otimes \left(2|s_m\rangle\langle s_m|-I_{m}\right),
\end{equation}
where $\ket{s_m}$ is the uniform superposition state over all states within a single block. This operator can be viewed as a rescaled version of the full diffusion operator $D_n$. In what follows, we refer to $D_n$ and $D_m$ as the global and local diffusion operators, respectively.        

The combination of the quantum oracle $O_t$ and the local diffusion operator $D_m$ defines the local Grover operator
\begin{equation}
G_{m} = D_{m}O_t.
\end{equation}
Unlike the two-dimensional subspace spanned by ${|t\rangle,|t^\perp\rangle}$ in the standard Grover search, the GRK algorithm admits a natural three-dimensional representation given by the orthonormal basis
\begin{subequations}
\begin{align}
    &|t\rangle = |t_1\rangle \otimes |t_2\rangle, \\
&|b\bar t\rangle = \frac{1}{\sqrt{b-1}} \sum_{j \neq t_2} |t_1\rangle \otimes |j\rangle, \\
&|\bar b\rangle = \frac{1}{\sqrt{N-b}}\left(\sqrt{N}|s_n\rangle - |t\rangle - \sqrt{b-1}|b\bar t\rangle\right).
\end{align}
\end{subequations}
Here, $|b\bar t\rangle$ is the uniform superposition of all non-target states within the target block, while $|\bar b\rangle$ is the uniform superposition of all states belonging to non-target blocks.

In the basis $\{|t\rangle,|b\bar t\rangle,|\bar b\rangle\}$, the initial state $|s_n\rangle$ defined in Eq. (\ref{eq:initial_state_sn}) can be expressed as
\begin{equation}
\label{eq:sn_three}
|s_n\rangle = \sin\gamma\sin\theta_2|t\rangle+\sin\gamma\cos\theta_2|b\bar t\rangle + \cos\gamma |\bar b\rangle,
\end{equation}
with the angles given by
\begin{equation}
\label{eq:define_theta2_gamma}
\sin\theta_2=\frac{1}{\sqrt b},\qquad \sin\gamma=\frac{1}{\sqrt K}.
\end{equation}
Both the global Grover operator $G_n$ and the local Grover operator $G_m$ can be represented in this basis as
\begin{widetext}
\begin{equation}
G_n = \begin{pmatrix}
1- 2\sin^2\gamma\sin^2\theta_2 & 2\sin^2\gamma\sin\theta_2\cos\theta_2 & 2\sin\gamma\cos\gamma\sin\theta_2 \\
-2\sin^2\gamma\sin\theta_2\cos\theta_2 & 2\sin^2\gamma\cos^2\theta_2 - 1 & 2\sin\gamma\cos\gamma\cos\theta_2 \\
-2\sin\gamma\cos\gamma\sin\theta_2 & 2\sin\gamma\cos\gamma\cos\theta_2 & 2\cos^2\gamma - 1 \\
\end{pmatrix},\quad
G_m = \begin{pmatrix}
\cos2\theta_2 & \sin2\theta_2 & 0 \\
-\sin2\theta_2 & \cos2\theta_2 & 0 \\
0 & 0 & 1 \\
\end{pmatrix}.\label{eq:matrix_GnGm}
\end{equation}
\end{widetext}
Therefore the algorithm given by $G_n$ and $G_m$ admits a representation within $O(3)$ group \cite{korepin2006group}. Note that $\det G_n = -1$. 

The GRK algorithm consists of three sequential steps: applying the global Grover operator $k_1$ times, followed by the local Grover operator $k_2$ times, and finally applying one global Grover operator. The probability of successfully identifying the target block $t_1$ is
\begin{equation}
\label{eq:pr_partial_k1_k2}
\pr_{x=t_1}(k_1,k_2) = 1-|\langle\bar{b} |G_n G_m^{k_2}G_n^{k_1}|s_n\rangle|^2.
\end{equation}
The goal is to find $k_1$ and $k_2$ such that $\pr_{x=t_1}(k_1,k_2) \approx 1$, while minimizing the total number of oracle queries $k_1+k_2+1$. In the limit of a large database and large block size ($n\gg 1$ and $m\gg 1$), the optimal iteration counts can be parameterized as
\begin{equation}
k_1 = \frac{\pi}{4} \sqrt{N} - \eta \sqrt{b}, \quad k_2 = \alpha \sqrt{b},\label{eq:j1j2_of_GRK}
\end{equation}
where $\eta$ and $\alpha$ are the free parameters. The condition $\pr_{x=t_1}(k_1,k_2) = 1$ leads to
\begin{equation}
\tan \frac{2\eta}{\sqrt{K}} = \frac{2\sqrt{K} \sin 2\alpha}{K - 4 \sin^2 \alpha}.
\end{equation}
The total oracle queries $k_1+k_2+1\approx \pi\sqrt{N}/4 + (\alpha - \eta) \sqrt{b}$ is minimized at \cite{Korepin2006opt}
\begin{equation}
\tan \frac{2\eta_K}{\sqrt{K}} = \frac{\sqrt{3K-4}}{K-2}, \quad \cos 2\alpha_K = \frac{K-2}{2(K-1)}.\label{eq:parameters_of_j1j2_of_GRK}
\end{equation}
In the large block number limit ($n-m\gg1$), the optimal parameters converge to $\alpha_K \to \pi/6$ and $\eta_K \to \sqrt 3 / 2$, which gives
\begin{equation}
\label{eq:GRK_k1_k2}
k_1 = \frac{\pi}{4} \sqrt{N} - \frac{\sqrt{3b}}{2}, \quad k_2 = \frac{\pi}{6} \sqrt{b}.
\end{equation}
We observe that the GRK algorithm is faster than Grover algorithm by a factor $c_\text{GRK}\sqrt b$ with $c_\text{GRK} = \sqrt 3/2-\pi/6\approx 0.3424$. 

It has been proven that the speedup of the quantum partial search algorithm over the standard Grover search is at most a factor of $\sqrt b$ \cite{grover2004partial}. The factor $\sqrt b$ is tight because any further improvement would imply the existence of a full search protocol (via a sequential partial search algorithm) faster than Grover's algorithm, which is impossible, given Grover's algorithm has been proven to be query-optimal. A key open question, however, is whether other operator sequences, such as interplay between global and local Grover operators, could yield a larger speedup factor $c\sqrt b$ with $c>c_\text{GRK}\approx 0.3424$. This raises the fundamental question of whether the GRK sequence is strictly optimal among all possible sequences. In the following, we compare admissible operator sequences and provide numerical evidence supporting the GRK sequence as the optimizing
ansatz in the cases examined.

%In the following, we comprehensively compare all admissible operator sequences to provide compelling evidence for the strict optimality of the GRK algorithm.

\section{\label{sec:num_quest} Numerical investigations on the optimal quantum partial search algorithm}

In this section, we first describe the setup for the numerical tests in Sec. \ref{sec:setup_numerical_test}. We then present, in Sec. \ref{sec:numerical_results}, the numerically obtained bounds on the success probability and the expected iteration number for the partial search algorithm.

\subsection{Numerical setups}

\label{sec:setup_numerical_test}

Previous studies have examined whether certain specific sequences could outperform the GRK algorithm \cite{Korepin2006quest}. For a given number of qubits $n$ and block size $m$, the numerical task of identifying the optimal sequence that achieves $\pr_{x=t_1}(k_1,k_2) \approx 1$ is complicated by the need to define a precise success probability threshold. Note that $\pr_{x=t_1}(k_1,k_2)$ is rarely exactly equal to 1 \cite{ye2025deterministic}. Here, we circumvent this numerical difficulty by considering a fixed total number of oracle queries $k_\text{tot}$ and comparing the success probabilities across all $2^{k_\text{tot}}$ possible sequences.

We define an operator sequence as
\begin{equation}
\label{eq:S_sequence}
S_{n,m}(k_1,k_2,\ldots,k_q)=
\begin{cases}
G_n^{k_1}G_m^{k_2}\cdots G_m^{k_{q-1}}G_n^{k_q},
& q \ \text{odd},\\[1mm]
G_m^{k_1}G_n^{k_2}\cdots G_m^{k_{q-1}}G_n^{k_q},
& q \ \text{even}.
\end{cases}
\end{equation}
Here \(k_1,k_2,\ldots,k_{q-1}\in\mathbb N^+\), while the last exponent is allowed to be \(k_q\in\mathbb N_0\). Thus the last index \(k_q\) always corresponds to a global Grover operator. To maintain a consistent notation, we require that the final index $k_q$ always corresponds to the number of global Grover operators applied. For example, \(S_{n,m}(1,0)=G_m\), $S_{n,m}(k,1) = G_m^{k}G_n$ and $S_{n,m}(1, k_2, k_1) = G_nG^{k_2}_m G_n^{k_1}$. Physically, a local Grover step \(G_m\) leaves the amplitude of the non-target block state \(|\bar b\rangle\) unchanged, and therefore a final \(G_m\) does not change the success probability for identifying the target block. Such sequences can thus be regarded as containing a redundant final oracle query. Nevertheless, for numerical convenience and to avoid introducing any bias in the enumeration, we scan all \(2^{k_{\mathrm{tot}}}\) binary sequences of length \(k_{\mathrm{tot}}\), where each oracle query is followed by either a global Grover step \(G_n\) or a local Grover step \(G_m\).

The total number of oracle queries is given by
\begin{equation}
k_{\text{tot}} = \sum_{p=1}^{q} k_p.
\end{equation}
The probability that the sequence $S_{n,m}(k_1, k_2, \cdots, k_q)$ successfully identifies the target block is
\begin{equation}
\pr_{x=t_1}(k_1, k_2, \cdots, k_q) = 1-\left| \langle \bar b | S_{n,m}(k_1, k_2, \cdots, k_q) | s_n \rangle \right|^2.
\end{equation}
Consequently, the expected number of iterations is
\begin{equation}
\E_{x=t_1}(k_1, k_2, \cdots, k_q) = \frac{k_{\text{tot}}}{\pr_{x=t_1}(k_1, k_2, \cdots, k_q)}.
\end{equation}
Then the problem is well-defined by finding
\begin{multline}
\label{eq:pr_max}
    \pr^\text{max}_{x=t_1}(k_{\text{tot}}) \\
    = \max_{k_1,\cdots,k_q}\left\{\pr_{x=t_1}(k_1, k_2, \cdots, k_q)|\sum_{p=1}^{q} k_p = k_{\text{tot}}\right\}.
\end{multline}
Then the minimal expected query can be obtained from
\begin{equation}
    \E^\text{min}_{x=t_1} = \min_{k_\text{tot}}\left\{\frac{k_\text{tot}}{\pr^\text{max}_{x=t_1}(k_{\text{tot}})}\right\}.
\end{equation}
Note that the last operator of optimal sequence giving $\pr^\text{max}_{x=t_1}(k_{\text{tot}})$ should be the global Grover operator because $\langle\bar b|G^k_m|\psi\rangle = \langle\bar b|\psi\rangle$.

\subsection{Numerical bounds and optimal sequences}

\label{sec:numerical_results}

\begin{figure}[htpb]
\centering
\includegraphics[width=0.5\textwidth]{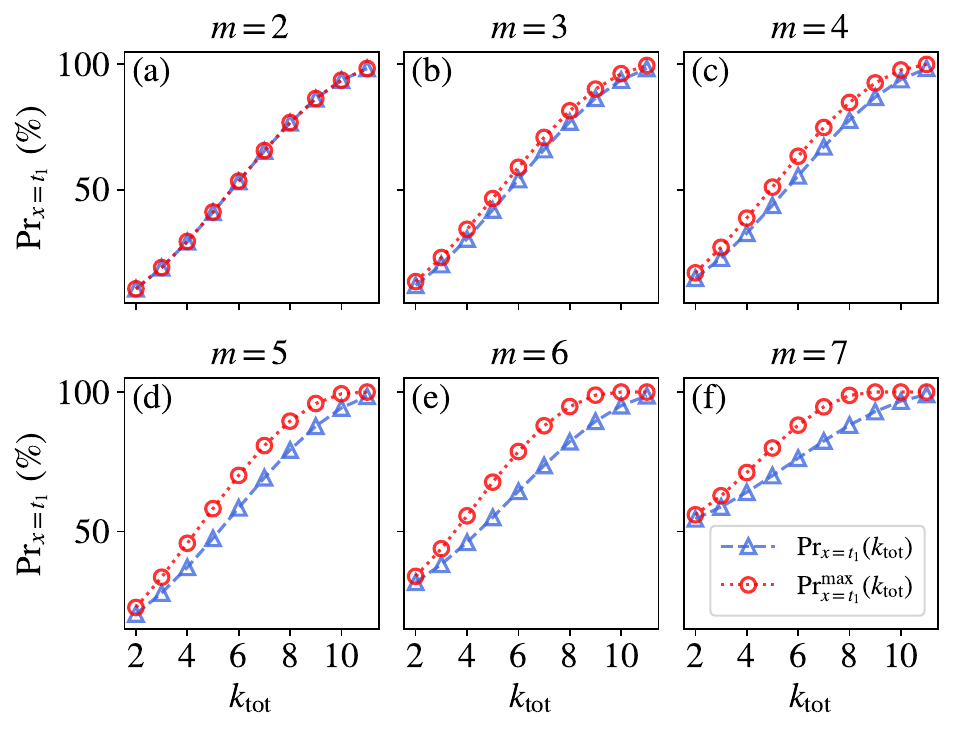}
\caption{Success probabilities of the quantum partial search algorithm as a function of the total number of oracle queries $k_\text{tot}$, for a database of size $N=2^8$. Values marked with red circles represent $\pr^\text{max}_{x=t_1}(k_{\text{tot}})$, as defined in Eq. (\ref{eq:pr_max}), obtained by optimizing over all possible operator sequences. Blue triangles correspond to the case where the partial search is performed using only the global Grover operator. The analytical expression for this probability is given in Eq. (\ref{eq:pr_grover}). The optimal sequences associated with $\pr^\text{max}_{x=t_1}(k_{\text{tot}})$ are listed in Table \ref{tab:n=8,pr}.}
\label{fig:pr_max}
\end{figure}

Consider a database of size $N=2^8$. Grover's algorithm requires $k=12$ oracle queries to reach a success probability of approximately $99.99\%$. We then focus on operator sequences $S_{n,m}$ comprising at most $11$ global or local Grover operators, i.e., with total query counts $k_\text{tot} \in [2, 11]$. The block size, determined by $m$, is varied within the range $m \in [2, 7]$. We conduct an exhaustive numerical search to identify, for each values $k_\text{tot}$ and $m$, the sequence yielding the maximal success probability $\pr^\text{max}_{x=t_1}(k_{\text{tot}})$ as defined in Eq. (\ref{eq:pr_max}). The resulting values of $\pr^\text{max}_{x=t_1}(k_{\text{tot}})$ are displayed in Fig. \ref{fig:pr_max}, and the corresponding optimal sequences are listed in Table \ref{tab:n=8,pr} in Appendix \ref{app:numerical_quest}. Except for a few over-rotation cases at large $k_\text{tot}$ discussed in Appendix \ref{app:numerical_quest}, the optimal sequences take the GRK form
\begin{equation}
S_{n,m}(1,k_2,k_1) = G_nG_m^{k_2}G_n^{k_1},
\end{equation}
providing compelling numerical evidence supporting the GRK sequence as the optimal ansatz in the cases examined. Additionally, when the database is partitioned into two blocks (corresponding to $m=n-1$ and $K=2$), we find that the optimal sequence omits the first global Grover operator, i.e., $k_1=0$. This result is consistent with the analysis presented in \cite{Korepin2006opt}.

In addition to the exhaustive enumeration for \(n=8\), we have performed targeted scans for larger database sizes up to \(n=40\) with a fixed total query count \(k_\text{tot}=20\). In every instance examined, the GRK‑form sequence yields the minimal expected number of iterations. Representative results are summarized in Appendix \ref{app:numerical_quest}.

\begin{figure}[htpb]
\centering
\includegraphics[width=0.5\textwidth]{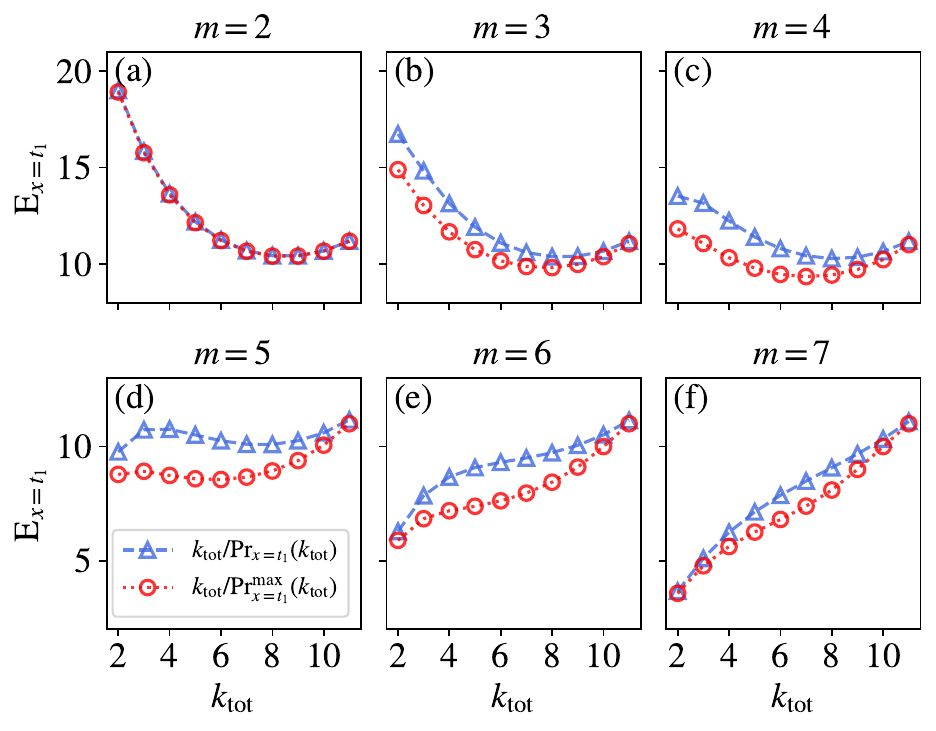}
\caption{Expected iteration numbers for the quantum partial search algorithm as a function of the total oracle queries $k_\text{tot}$, for a database of size $N=2^8$. The red circles show the values obtained from the optimized success probability $\pr^\text{max}_{x=t_1}(k_{\text{tot}})$ defined in Eq. (\ref{eq:pr_max}). The blue triangles correspond to using only the global Grover operator, with the success probability $\pr_{x=t_1}(k_{\text{tot}})$ given in Eq. (\ref{eq:pr_grover}). The corresponding numerical values are detailed in Table \ref{tab:n=8,k}.}
\label{fig:E}
\end{figure}

The corresponding expected numbers of iterations are shown in Fig. \ref{fig:E}, where the explicit values are listed in Table \ref{tab:n=8,k} in Appendix \ref{app:numerical_quest}. For $m \leq 5$, we find that the expected iteration number is not a monotonic function of $k_\text{tot}$. Therefore, a certain speedup can be achieved by measuring before $k_\text{tot}$ reaches its maximum value, analogous to the strategy employed in Grover's algorithm. In contrast, for $m > 5$, the minimal expected iteration number always occurs at $k_\text{tot}=2$, indicating low efficiency of the algorithm in this regime. This conclusion will be analytically verified in Sec. \ref{subsec:k_partial_search_with local}.

In Figs. \ref{fig:pr_max} and \ref{fig:E}, we also present the success probability and the expected iteration number for solving the partial search problem using only the global Grover operator. It is easy to see that
\begin{multline}
G_n^k|s_n\rangle = \sin((2k+1)\theta_1)|t\rangle\\ + \cos((2k+1)\theta_1)\frac{\sin\gamma\cos\theta_2}{\cos\theta_1}|b\bar t\rangle\\ + \cos((2k+1)\theta_1)\frac{\cos\gamma}{\cos\theta_1}|\bar b\rangle,
\end{multline}
due to 
\begin{equation}
    |t^\perp\rangle = \frac{\sin\gamma\cos\theta_2}{\cos\theta_1}|b\bar t\rangle + \frac{\cos\gamma}{\cos\theta_1}|\bar b\rangle,
\end{equation}
with the angles defined in Eq. (\ref{eq:define_theta2_gamma}). Then we have
\begin{equation}
\label{eq:pr_grover}
    \pr_{x=t_1}(k) = \sin^2((2k+1)\theta_1)+\frac{b-1}{N-1}\cos^2((2k+1)\theta_1).
\end{equation}
The first term gives the success probability of finding the full target state, while the second term corresponds to the success probability for the non-target states within the target block. Since the second term scales only as $\mathcal O(b/N)$, this indicates that Grover's algorithm (with only the global Grover operator) is not efficient for solving the partial search problem.

\section{\label{sec:exact_bounds_partial} \textbf{Asymptotically tight} bounds on quantum partial search algorithm}

In this section, we derive the analytical bounds for the quantum partial search algorithm. Tight bounds on the success probability and the expected iteration number are presented in Secs. \ref{subsec:pr_partial_search_with local} and \ref{subsec:k_partial_search_with local}, respectively.

\subsection{\label{subsec:pr_partial_search_with local} Upper bound on the success probability}

First, we address the upper bound on the success probability given by $k$ oracle queries. We have the following theorem. 
\begin{theorem}\label{theorem:pr_GRK}
Assuming that, for \(k_\mathrm{tot}=1+\alpha\sqrt{N}\), the optimal sequence is attained within the GRK family \(G_nG_m^{k_2}G_n^{k_1}\), the maximal target-block success probability in the joint limit \(b,K\gg1\) is
\begin{equation}
\label{eq:pr_GRK}
\pr^{\max}_{x=t_1}(k_\mathrm{tot})
=
\sin^2(2\alpha)
+
\epsilon\gamma\sin(4\alpha)
+
\mathcal O(\gamma^2)
+
\mathcal O(b^{-1/2}),
\end{equation}
where \(\epsilon\approx0.6849\).
\end{theorem}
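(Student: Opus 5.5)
Under the stated GRK assumption the problem becomes a two-parameter optimization, and I would solve it by an explicit perturbative evaluation of the non-target-block amplitude. Since $k_\mathrm{tot}=1+\alpha\sqrt N$ and the final $G_n$ uses one query, I set $k_1+k_2=\alpha\sqrt N$. I parametrize $k_2=\beta\sqrt b$ and $k_1=\alpha\sqrt N-\beta\sqrt b$, with $\beta\ge 0$ the only free parameter. By Eq.~(\ref{eq:pr_partial_k1_k2}), $\pr_{x=t_1}=1-A^2$ with $A=\langle\bar b|G_nG_m^{k_2}G_n^{k_1}|s_n\rangle$. The plan is to compute $A$ to first order in $\gamma$ and then maximize over $\beta$. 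Throughout I use $\theta_1\simeq 1/\sqrt N=\gamma\theta_2$, $\theta_2\simeq 1/\sqrt b$ and $\gamma\simeq 1/\sqrt K$. Replacing $\sin\theta_2$ by $1/\sqrt b$ and rounding $k_1,k_2$ to integers costs $\mathcal O(b^{-1/2})$; these errors are collected at the end.

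\emph{Step 1 (global stage).} By the expression for $G_n^k|s_n\rangle$ given above, after $k_1$ global iterations the state is $\sin\varphi|t\rangle+\cos\varphi\,\frac{\sin\gamma\cos\theta_2}{\cos\theta_1}|b\bar t\rangle+\cos\varphi\frac{\cos\gamma}{\cos\theta_1}|\bar b\rangle$, where $\varphi=(2k_1+1)\theta_1=2\alpha-2\beta\gamma+\mathcal O(N^{-1/2})$. To the needed order this gives the component triple $(\sin\varphi,\ \gamma\cos\varphi,\ \cos\varphi)$ up to $\mathcal O(\gamma^2)$.

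\emph{Step 2 (local stage).} By Eq.~(\ref{eq:matrix_GnGm}), $G_m^{k_2}$ leaves the $|\bar b\rangle$ component fixed and rotates the $(|t\rangle,|b\bar t\rangle)$ pair by the angle $2k_2\theta_2=2\beta+\mathcal O(b^{-1/2})$. Hence $\psi_{\bar b}=\cos\varphi$ and $\psi_{b\bar t}=-\sin\varphi\sin2\beta+\mathcal O(\gamma)$.

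\emph{Step 3 (final $G_n$).} The third row of $G_n$ gives $A=-2\sin\gamma\cos\gamma\sin\theta_2\,\psi_t+2\sin\gamma\cos\gamma\cos\theta_2\,\psi_{b\bar t}+\cos2\gamma\,\psi_{\bar b}$. This reduces to $A=\cos\varphi+2\gamma\psi_{b\bar t}+\mathcal O(\gamma^2)+\mathcal O(b^{-1/2})$. Expanding $\cos\varphi=\cos2\alpha+2\beta\gamma\sin2\alpha+\mathcal O(\gamma^2)$, I obtain
\begin{equation*}
A=\cos2\alpha+2\gamma\sin2\alpha\,(\beta-\sin2\beta)+\mathcal O(\gamma^2)+\mathcal O(b^{-1/2}).
\end{equation*}
It follows that $\pr_{x=t_1}=\sin^2 2\alpha+2\gamma\sin4\alpha\,(\sin2\beta-\beta)+\mathcal O(\gamma^2)+\mathcal O(b^{-1/2})$.

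\emph{Step 4 (optimization).} In the regime $\sin4\alpha>0$ where the partial search is not yet over-rotated, maximizing over $\beta$ means maximizing $f(\beta)=\sin2\beta-\beta$. The condition $f'(\beta)=0$ gives $\cos2\beta=1/2$, i.e.\ $\beta=\pi/6$. There $f=\sqrt3/2-\pi/6=c_\text{GRK}$, so $\epsilon=2c_\text{GRK}\approx0.6849$, which is consistent with Eq.~(\ref{eq:GRK_k1_k2}). The maximizer is uniform in $\alpha$, and the discretization of $\beta\sqrt b$ moves $f$ only at second order. Together with the GRK assumption, this turns the value at $\beta=\pi/6$ into the claimed $\pr^{\max}$.

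The main obstacle is the bookkeeping of the error terms. I must verify that all cross terms, such as $\gamma\cdot\sin\theta_2$ contributions from $\psi_t$ and the $\mathcal O(\gamma)$ correction to $\psi_{b\bar t}$ multiplied by $2\gamma$, are genuinely $\mathcal O(\gamma^2)$ or $\mathcal O(b^{-1/2})$ uniformly in $\alpha$ and $\beta$. I also need to check the sign conventions in $G_m$ carefully, since they decide whether the correction term is $\sin2\beta-\beta$ or $-\sin2\beta-\beta$. I would check both points against the exact $3\times3$ product numerically for moderate $b$ and $K$.
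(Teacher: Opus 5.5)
Your proposal is correct and follows the same route as the paper. You compute the $|\bar b\rangle$ amplitude of $G_nG_m^{k_2}G_n^{k_1}|s_n\rangle$ to first order in $\gamma$, use $k_1\theta_1\approx\alpha-\beta\gamma$ to obtain $\cos2\alpha+2\gamma\sin2\alpha\,(\beta-\sin2\beta)$, and optimize at $\beta=\pi/6$, which gives $\epsilon=2(\sqrt3/2-\pi/6)\approx0.6849$. Your explicit restriction to $\sin4\alpha>0$, where $\beta=\pi/6$ is genuinely the maximizer, and your error bookkeeping are more careful than the paper's proof, which leaves both implicit.
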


%$k_\text{tot}=1+k_1+k_2=1+\alpha\sqrt{N}$

%with the constant $\epsilon=-2\gamma f\left(\pi/6\right),f\left(\pi/6\right) \approx -0.342427$.

\begin{proof}
Under the assumption stated in the theorem, supported by the numerical evidence in Sec.~\ref{sec:num_quest}, the optimization for a fixed total number of oracle queries $k_{\mathrm{tot}}$ can be restricted to sequences of the GRK form $G_nG_m^{k_2}G_n^{k_1}$. Consequently, the optimization reduces to finding the parameters $k_1$ and $k_2$ that maximize the success probability for a fixed $k_\text{tot}$. 

In the large block limit $b\gg 1$, the initial state $|s_n\rangle$ given by Eq. (\ref{eq:sn_three}) can be written as
\begin{equation}
|s_n\rangle = (0,\sin\gamma,\cos\gamma)^T + \mathcal O\left(\frac{1}{\sqrt b}\right).
\end{equation}
Acting $G_nG_m^{k_2}G_n^{k_1}$ on the initial state gives
\begin{multline}
\langle \bar b|G_nG_m^{k_2}G_n^{k_1}|s_n\rangle  = -\sin(2\gamma)\sin(2k_1\theta_1)\sin(2k_2\theta_2) \\ +\sin(2\gamma)\sin\gamma\cos(2k_1\theta_1)\cos(2k_2\theta_2)\\ + \cos\gamma\cos(2\gamma)\cos(2k_1\theta_1) + \mathcal O\left(\frac{1}{\sqrt b}\right).
\end{multline}
Furthermore, consider the large block number limit $K\gg 1$ (equivalent to $\gamma\to 0$), which gives
\begin{multline}
\langle \bar b|G_nG_m^{k_2}G_n^{k_1}|s_n\rangle = -2\gamma \sin(2k_1\theta_1)\sin(2k_2\theta_2)\\ + \cos(2k_1\theta_1) + \mathcal O\left(\gamma^2\right) + \mathcal O\left(\frac{1}{\sqrt b}\right).\label{eq:3rd_GRK_inistate}
\end{multline}

Since the total oracle queries $k_\mathrm{tot} = 1+\alpha\sqrt{N}$ is fixed, we have $k_\mathrm{tot}\theta_1 \approx \alpha$. Therefore, we have 
\begin{equation}
    k_1\theta_1 = (k_\text{tot}-k_2-1)\theta_1 \approx \alpha-k_2\theta_1.
\end{equation}
Recall that we have the angles $\theta_1\approx 1/\sqrt N$, $\theta_2\approx 1/\sqrt b$, and $\gamma\approx 1/\sqrt K$. From $N=bK$, we know $\theta_1\approx \theta_2\gamma$. Suppose that $k_2\theta_2 = \beta$, then we have
\begin{equation}
    k_1\theta_1 \approx \alpha - \beta\gamma. 
\end{equation}
Thus Eq. \eqref{eq:3rd_GRK_inistate} can be rewritten as
\begin{multline}
\langle \bar b|G_nG_m^{k_2}G_n^{k_1}|s_n\rangle = \cos(2\alpha) + 2\sin(2\alpha)\gamma(\beta - \sin(2\beta))\\ + \mathcal O\left(\gamma^2\right) + \mathcal O\left(\frac{1}{\sqrt b}\right).\label{eq:sqrtpr:u}
\end{multline}
Define the function
\begin{equation}
f(x) = x - \sin(2x),
\end{equation}
which takes the minimum at $x = \pi/6$. The minimal value is $f\left(\pi/6\right) \approx -0.342427$. Let $\epsilon=-2 f\left(\pi/6\right)\approx 0.6849$ for convenience. The success probability of finding the target block has the maximum
\begin{multline}
    \pr^{\max}_{x=t_1}(k_\mathrm{tot}) =  1- \left|\langle \bar b|G_nG_m^{k_2}G_n^{k_1}|s_n\rangle\right|^2 \nonumber \\
    =  \sin^2(2\alpha) + \sin(4\alpha)\epsilon\gamma + \mathcal O\left(\gamma^2\right)+\mathcal O(b^{-1/2}).
\end{multline}
\end{proof}

\begin{figure}[t]
\centering
\includegraphics[width=\columnwidth]{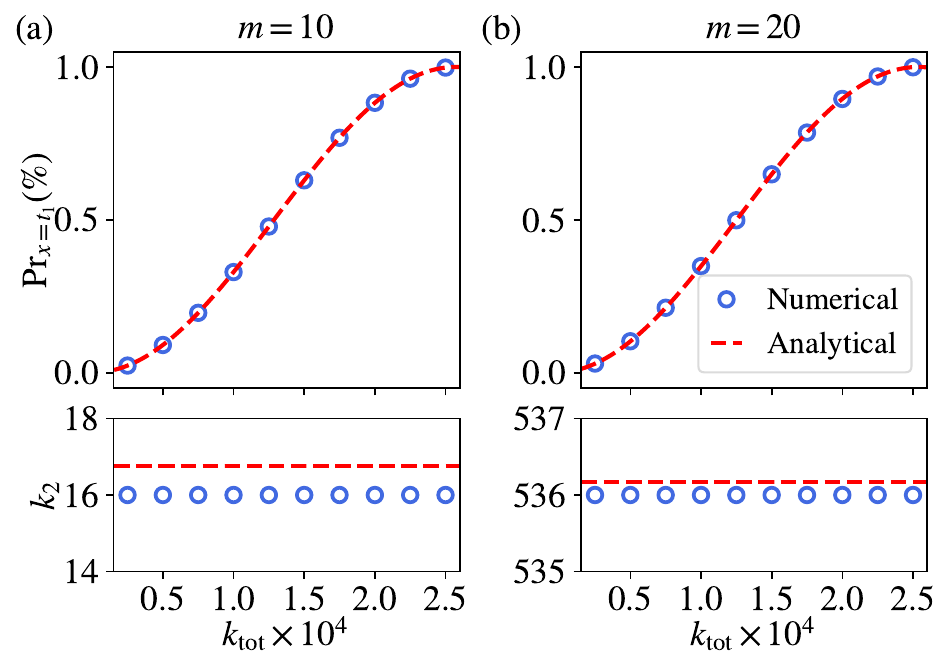}
\caption{Maximal success probability of the quantum partial search algorithm with (a) $m=10$ and (b) $m=20$, for a database of size $N=2^{30}$. The analytical curves are plotted from Eq. \eqref{eq:pr_GRK}. Numerical results are obtained by maximizing the success probability over $G_nG_m^{k_2}G_n^{k_1}$, under the constraint $k_\text{tot} = 1+k_1+k_2$. The value of $k_2$ used analytically is given by $k_2 = \pi\sqrt b/6$.}
\label{fig:bound_pr}
\end{figure}

The first term of the maximal success probability $\pr^{\max}_{x=t_1}(k_\mathrm{tot})$ in Eq. \eqref{eq:pr_GRK} is about the full target state $|t\rangle$. The second term is about the non-target state in the target block. Note that the second term scales as $1/\sqrt K$, which is optimal. Quantitatively, we can compare it with the full search. 
\begin{corollary}
For a fixed total number of oracle queries \(k_{\mathrm{tot}}\), the GRK partial-search sequence gives a higher target-block success probability than the standard full Grover search
\begin{equation}
    \pr^{\max}_{x=t_1}(k_\mathrm{tot})-\pr_{x=t}(k_\mathrm{tot})
    =
    \epsilon\gamma\sin(4\alpha)
    +\mathcal O(\gamma^2),
\end{equation}
where \(\epsilon\approx0.6849\).
\end{corollary}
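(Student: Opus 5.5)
The corollary will follow by subtracting the known Grover success probability from the expansion in Theorem~\ref{theorem:pr_GRK}, evaluated at the same total query count. I will parametrize $k_\mathrm{tot}=1+\alpha\sqrt N$ exactly as in the theorem, so that $k_\mathrm{tot}\theta_1\approx\alpha$ with $\theta_1\approx 1/\sqrt N$. By Eq.~\eqref{eq:success_pr_of_Grover}, the full Grover search with $k_\mathrm{tot}$ iterations succeeds with probability $\pr_{x=t}(k_\mathrm{tot})=\sin^2((2k_\mathrm{tot}+1)\theta_1)$. Since $(2k_\mathrm{tot}+1)\theta_1=2\alpha+3\theta_1$ and $\theta_1=\theta_2\gamma\approx\gamma/\sqrt b$, this equals $\sin^2(2\alpha)+\mathcal O(\theta_1)$. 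The correction $\mathcal O(\theta_1)=\mathcal O(\gamma b^{-1/2})$ is dominated by both $\gamma^2$ and $b^{-1/2}$, so it is absorbed into the theorem's error terms.

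Subtracting this from Eq.~\eqref{eq:pr_GRK}, the leading $\sin^2(2\alpha)$ terms cancel. What remains is $\epsilon\gamma\sin(4\alpha)$ plus $\mathcal O(\gamma^2)+\mathcal O(b^{-1/2})$, with $\epsilon=-2f(\pi/6)\approx 0.6849$ taken directly from the proof of the theorem. I will also note the physical reading: the difference is exactly the probability weight carried by the non-target states $|b\bar t\rangle$ in the target block. The GRK sequence increases this weight at order $\gamma$, whereas Eq.~\eqref{eq:pr_grover} shows that global-only search gives it only at order $b/N=\gamma^2$.

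The main obstacle is the error term. The corollary writes only $\mathcal O(\gamma^2)$, but Theorem~\ref{theorem:pr_GRK} also carries $\mathcal O(b^{-1/2})$, and that term does not cancel in the subtraction. I would handle it by stating the result in the same joint limit $b,K\gg1$ and taking $b$ large enough that $b^{-1/2}\lesssim\gamma^2$, i.e.\ $b\gtrsim K^2$, so the error can be written as $\mathcal O(\gamma^2)$. Alternatively, I would keep the $\mathcal O(b^{-1/2})$ term explicitly as part of the error.

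To justify the word ``higher'' for the sign, I will restrict to $0<\alpha<\pi/4$, which covers the whole range below Grover's optimum $k\approx\pi\sqrt N/4$. There $\sin(4\alpha)>0$, so the order-$\gamma$ gain is positive and dominates the errors when $\gamma$ is small but $\gamma\sin 4\alpha\gg\gamma^2$. Near $\alpha=\pi/4$ both probabilities approach one, and the comparison must be made at higher order. I will state this caveat rather than claim a positive gain there.
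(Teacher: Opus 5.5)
Your argument is correct and follows the paper's own one-line route: subtract Grover's $\sin^2((2k_\mathrm{tot}+1)\theta_1)=\sin^2(2\alpha)+\mathcal O(\theta_1)$ from the expansion in Theorem~\ref{theorem:pr_GRK}. Your two caveats are legitimate refinements the paper omits, not gaps: the uncancelled $\mathcal O(b^{-1/2})$ term, which the paper silently drops, and the restriction $0<\alpha<\pi/4$, which Theorem~\ref{theorem:pr_GRK} itself tacitly needs, since for $\pi/4<\alpha<\pi/2$ its formula would fall below what $k_2=0$ already achieves. Drop your physical reading, though; it is not used in the proof and it is wrong. At $k_2\theta_2=\pi/6$ the GRK state carries $\tfrac14\sin^2(2\alpha)+\mathcal O(\gamma)$ on $|t\rangle$ and $\tfrac34\sin^2(2\alpha)+\mathcal O(\gamma)$ on $|b\bar t\rangle$, so the order-$\gamma$ gain is not the $|b\bar t\rangle$ weight but the reduction of the $|\bar b\rangle$ weight relative to global-only search.
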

\begin{proof}
The proof can be easily derived from comparing the maximal success probability $\pr^{\max}_{x=t_1}(k_\mathrm{tot})$ in Eq. \eqref{eq:pr_GRK} and the success probability of Grover search $\pr_{x=t}(k_\mathrm{tot})$ in Eq. (\ref{eq:success_pr_of_Grover}). 
\end{proof}

An interesting observation is that the maximal success probability is achieved at $k_2 = \pi\sqrt b/6$, which is independent of both the total database size $N$ and the total number of oracle queries $k_\text{tot}$. We also provide numerical verification of the analytical bound derived in Eq. \eqref{eq:pr_GRK}, as shown in Fig. \ref{fig:bound_pr}. The numerical results confirm that $k_2$ remains constant regardless of $k_\text{tot}$.

We emphasize that the expansion in Eq.~\eqref{eq:pr_GRK} is asymptotic in the joint limit \(b\gg1\) and \(K\gg1\). The nominal scales of the two remainder terms are \(1/K\) and \(1/\sqrt b\), respectively. Therefore, for small databases or small block sizes, these corrections are not necessarily negligible. The purpose of Eq.~\eqref{eq:pr_GRK} is to characterize the large-\(b\), large-\(K\) scaling behavior. Nevertheless, for the database size \(N=2^{30}\) considered in Fig.~\ref{fig:bound_pr}, the asymptotic expression shows good agreement with the numerical results.

\subsection{\label{subsec:k_partial_search_with local} Lower bound on the expected iteration number}

As shown in the previous subsection, the growth rate of the maximal success probability of the partial search algorithm with respect to the total number of oracle queries $k_\text{tot}$ decreases as the success probability approaches unity. It is therefore advantageous to terminate the algorithm before the success probability approaches unity. Employing a serial execution strategy can, in this context, provide a speedup. First, we aim to find the total number of oracle queries $k_\text{tot}$ which gives the minimal expected iteration number. 
\begin{lemma}
\label{lemma:extrem_point}
In the asymptotic regime $b \gg 1$ and $K \gg 1$, the GRK sequence $G_n G_m^{k_2} G_n^{k_1}$ achieves its minimal expected iteration number at
\begin{equation}
\label{eq:k_tot_min_E}
    k_\mathrm{tot} \approx 0.5829\sqrt N - 0.4969\sqrt b,
\end{equation}
or at the boundary value $k_\mathrm{tot} = 1$.
\end{lemma}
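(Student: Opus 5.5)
The plan is to reduce the minimization of the expected iteration number to a one-variable problem by means of Theorem~\ref{theorem:pr_GRK}, and then solve it perturbatively in $\gamma$. Under the GRK assumption, write $k_\mathrm{tot}=1+\alpha\sqrt N$. By Theorem~\ref{theorem:pr_GRK}, with $k_2=\pi\sqrt b/6$ fixed,
\begin{equation}
\E_{x=t_1}(k_\mathrm{tot})\approx \frac{\alpha\sqrt N}{g(\alpha)},\qquad g(\alpha)=g_0(\alpha)+\epsilon\gamma\, h(\alpha),
\end{equation}
with $g_0(\alpha)=\sin^2(2\alpha)$ and $h(\alpha)=\sin(4\alpha)$. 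The additive $1$ in $k_\mathrm{tot}$ and the $\mathcal O(\gamma^2)$, $\mathcal O(b^{-1/2})$ remainders are dropped at the order we keep. An interior extremum of $\alpha/g(\alpha)$ satisfies the stationarity condition
\begin{equation}
\Phi(\alpha,\gamma)\equiv g(\alpha)-\alpha g'(\alpha)=0 .
\end{equation}

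At zeroth order in $\gamma$ this is $\sin^2(2\alpha)=4\alpha\sin(2\alpha)\cos(2\alpha)$, i.e. $\tan(2\alpha_0)=4\alpha_0$. This is exactly the Grover serial-execution condition of Sec.~\ref{subsec:grover}, so $\alpha_0\approx0.5829$, which reproduces the leading term of Eq.~\eqref{eq:k_tot_min_E}. Next, set $\alpha=\alpha_0+\delta\gamma$ and expand $\Phi$ to first order. Using $\partial_\alpha(g_0-\alpha g_0')=-\alpha g_0''$, we get
\begin{equation}
-\alpha_0 g_0''(\alpha_0)\,\delta+\epsilon\bigl(h(\alpha_0)-\alpha_0h'(\alpha_0)\bigr)=0,
\end{equation}
so that
\begin{equation}
\delta=\frac{\epsilon\,(h-\alpha_0h')(\alpha_0)}{\alpha_0\,g_0''(\alpha_0)}.
\end{equation}
Since $\gamma\sqrt N\approx\sqrt N/\sqrt K=\sqrt b$, the shift $\delta\gamma\sqrt N$ becomes $\delta\sqrt b$. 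I then evaluate $\delta$ numerically with $\epsilon\approx0.6849$, $g_0''=8\cos(4\alpha)$, $h=\sin 4\alpha$ and $h'=4\cos4\alpha$, and expect $\delta\approx-0.4969$. I would also check that $g_0''(\alpha_0)$ has the sign that makes this stationary point a minimum of $\alpha/g$ rather than a maximum; this follows from the corresponding Grover fact together with continuity in $\gamma$.

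Two consistency checks remain. First, the GRK family requires $k_1\ge0$, i.e. $\alpha\gtrsim\gamma\pi/6$. This clearly holds at $\alpha\approx0.58$, so the fixed choice $k_2=\pi\sqrt b/6$ from the proof of Theorem~\ref{theorem:pr_GRK} remains admissible and the extremum is genuinely interior. Second, the expansion of Theorem~\ref{theorem:pr_GRK} is only valid for $\alpha=\mathcal O(1)$. For $k_\mathrm{tot}=\mathcal O(1)$, or more generally $\alpha\to0$, the ratio $k_\mathrm{tot}/\pr$ must be examined separately. In that regime I would use the exact small-$k$ success probability, e.g. $\pr$ of a single $G_n$ from Eq.~\eqref{eq:pr_grover}, and observe that $\E$ is monotone away from $k_\mathrm{tot}=1$ there. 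This makes $k_\mathrm{tot}=1$ the only other candidate, which is why the lemma lists it as the boundary alternative; the global minimum is whichever candidate gives the smaller value, consistent with the $m>5$ numerics of Sec.~\ref{sec:num_quest}.

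The main obstacle is controlling the uniformity of the error terms. The perturbative shift $\delta\gamma$ must dominate the neglected $\mathcal O(\gamma^2)$ and $\mathcal O(b^{-1/2})$ corrections in $\Phi$, and the small-$\alpha$ region must be matched to the asymptotic one without missing another local minimum. I would handle this by stating the result as an asymptotic statement in the joint limit, with $\sqrt N\,\mathcal O(\gamma^2)=\mathcal O(\sqrt b/\sqrt K)$ subleading, and by checking the boundary region numerically against the $N=2^{30}$ data.
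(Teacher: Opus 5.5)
Your proposal is correct. The interior computation is the paper's argument. Both reduce $\E$ to $\alpha\sqrt N/(\sin^2 2\alpha+\epsilon\gamma\sin 4\alpha)$, start from the Grover root $\tan 2\alpha_0=4\alpha_0$, and substitute $\alpha_0+\delta\gamma$ into the stationarity condition. Your implicit-function formula just makes that substitution explicit. Since $h-\alpha_0h'=4\alpha_0$ at the root, it simplifies to $\delta=\epsilon/(2\cos 4\alpha_0)\approx-0.4969$, and $\gamma\sqrt N=\sqrt b$ then gives Eq.~\eqref{eq:k_tot_min_E}.

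Where you genuinely differ is the boundary candidate. The paper gets $k_\mathrm{tot}=1$ by noting that $\alpha_2=0$ solves Eq.~\eqref{eq eq of alpha}. You instead argue that Theorem~\ref{theorem:pr_GRK} is invalid at small $\alpha$ and treat that region separately. Yours is the more defensible route. At $\alpha=0$, both the numerator and the denominator of the derivative of the paper's $g$ in Eq.~\eqref{eq:g_alpha} vanish, and its actual limit is $g'(0^+)=-1/(4\epsilon^2\gamma^2)\neq0$. So the boundary minimum really comes from the $1/K$ baseline hidden in the discarded $\mathcal O(\gamma^2)$ term, which the paper's root-counting does not see.

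I have one correction and one suggestion. The correction: ``monotone away from $k_\mathrm{tot}=1$'' cannot hold throughout the small-$\alpha$ region. $\E$ must climb from $\approx K$ to order $K\sqrt b$ and then fall back to order $\sqrt N$. What you actually need is the absence of a local minimum there.

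The suggestion: this can be shown analytically rather than by the numerical check you propose.
\begin{itemize}
\item Keep the $\gamma^2$ terms of the amplitude displayed just before Eq.~\eqref{eq:3rd_GRK_inistate}, with $k_1=y\sqrt b$ and $k_2=\beta\sqrt b$. This gives $\pr_{x=t_1}\approx\gamma^2(5+4y^2-4\cos 2\beta+8y\sin 2\beta)$.
\item At fixed $z=y+\beta$ the $\beta$-derivative is $8(z-\beta)(2\cos2\beta-1)$, so the maximum is at $\beta=\min(z,\pi/6)$.
\item The maximal probability is therefore $\gamma^2(1+8\sin^2 z)$ for $z\le\pi/6$ and $\gamma^2(2z+\epsilon)^2$ for $z\ge\pi/6$.
\item Hence $\E\approx K\sqrt b\,z/(1+8\sin^2 z)$, respectively $K\sqrt b\,z/(2z+\epsilon)^2$.
\item This has a single local maximum near $z\approx0.38$. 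It then decreases into the overlap with the interior regime, where it matches $\sqrt N/(4\alpha)$.
\end{itemize}
This closes the matching step you flagged as the main obstacle and confirms that the lemma's two candidates are the only ones.
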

\begin{proof}
The maximal success probability $\pr^{\max}_{x=t_1}(k_\mathrm{tot})$ in Eq. \eqref{eq:pr_GRK} gives the expected iteration number
\begin{multline}
    \label{eq:calculate_E(GRK)}
    \E_{x=t_1}(1,k_2,k_1)=\frac{1+k_1+k_2}{\pr_{x=t_1}(1,k_2,k_1)}\\
     =\frac{2\alpha\sqrt N}{1-\cos (4\alpha)+2\epsilon\gamma\sin(4\alpha)}+\mathcal O\left(\gamma^2\right),
\end{multline}
where $\alpha\sqrt N\approx k_\text{tot}$. To find its minimum, let 
\begin{equation}
\label{eq:g_alpha}
g(\alpha)=\frac{2\alpha}{1-\cos (4\alpha)+2\epsilon\gamma\sin(4\alpha)}.
\end{equation}
Let $g'(\alpha)=0$ which gives
\begin{multline}
1-\cos(4\alpha)-4\alpha\sin(4\alpha)\\
+2\epsilon\gamma\sin(4\alpha)-8\epsilon\gamma\alpha\cos(4\alpha)=0.\label{eq eq of alpha}
\end{multline}
To solve the above equation, we employ a perturbation approach. In the case $\epsilon = 0$, the solution within the interval $(0,1)$ is $\alpha_0 \approx 0.5829$. Since $0<\gamma \ll 1$ (corresponding to the large block number limit $K\gg 1$), we expand the solution as $\alpha_1 = \alpha_0 + \varepsilon\gamma$. Substituting this ansatz into Eq. \eqref{eq eq of alpha} yields $\varepsilon \approx -0.4969$, which determines the total iteration count in Eq. (\ref{eq:k_tot_min_E}). An additional solution of Eq. \eqref{eq eq of alpha} is $\alpha_2 = 0$. This corresponds to the case $k_\text{tot} = 1$.

The perturbative expansion is well controlled because the stationary equation has a simple root at \(\gamma=0\). Indeed, the derivative of the left-hand side of Eq.~\eqref{eq eq of alpha} with respect to \(\alpha\), evaluated at \((\alpha_0,\gamma=0)\), is nonzero. Therefore, the solution depends smoothly on \(\gamma\) in a neighbourhood of \(\gamma=0\). The next correction to \(\alpha_1\) is of order \(\gamma^2\). Since \(\gamma=1/\sqrt K\), this correction is of order \(1/K\), and its contribution to the expected iteration number is beyond the order retained in Eq.~\eqref{eq:k_tot_min_E}.

\end{proof}

Based on the above lemma, we obtain the minimal expected iteration number of the quantum partial search algorithm as follows.
\begin{theorem}\label{theorem:k_opt_GRK_partial_search}
Within the GRK-family optimization, the minimal expected number of iterations is
\begin{equation}
\label{eq:expected_k_GRK}
\E_{x=t_1}^{\min}
\approx
\begin{cases}
0.69\sqrt N - 0.4054\sqrt b, & m\leq \lfloor n/2\rfloor,\\[1mm]
K-\dfrac{8K^2}{N}, & m>\lfloor n/2\rfloor.
\end{cases}
\end{equation}
For \(m\leq \lfloor n/2\rfloor\), the minimum is attained by the GRK partial search operator, whereas for \(m>\lfloor n/2\rfloor\) it is achieved by \(k_{\mathrm{tot}}=1\).
\end{theorem}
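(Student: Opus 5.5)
The plan is to compare the two candidate minimizers identified in Lemma~\ref{lemma:extrem_point}: the interior stationary point $\alpha_1=\alpha_0+\varepsilon\gamma$, and the boundary point $k_\mathrm{tot}=1$. We evaluate the expected iteration number at each, keeping terms through first order in $\gamma$ (equivalently, through order $\sqrt b$ after multiplying by $\sqrt N$). The threshold $m=\lfloor n/2\rfloor$ should then come from deciding which of the two values is smaller.

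\emph{Interior value.} Substitute $\alpha_1$ into $\sqrt N\,g(\alpha)$ from Eq.~\eqref{eq:g_alpha}. Since $g'(\alpha_0)=0$ when $\epsilon=0$, the $\varepsilon\gamma$ shift in $\alpha$ does not contribute at first order. It therefore suffices to expand the denominator:
\begin{equation}
g(\alpha_1)\approx g_0(\alpha_0)\left(1-\frac{2\epsilon\gamma\sin(4\alpha_0)}{1-\cos(4\alpha_0)}\right),
\end{equation}
where $g_0(\alpha)=2\alpha/(1-\cos 4\alpha)$. Note that $g_0(\alpha_0)=k_{\min}/\pr_{x=t}(k_{\min})$ in the units of Eq.~\eqref{eq:E_min_Grover}, so $\sqrt N g_0(\alpha_0)\approx 0.69\sqrt N$. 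Multiplying the correction by $\sqrt N$ and using $\gamma\sqrt N=\sqrt b$ gives a term $-c\sqrt b$ with
\begin{equation}
c=\frac{2\epsilon\, g_0(\alpha_0)\sin 4\alpha_0}{1-\cos 4\alpha_0}.
\end{equation}
Inserting $\alpha_0\approx0.5829$ and $\epsilon\approx0.6849$ should give $c\approx0.4054$. A useful cross-check is the stationarity relation $1-\cos4\alpha_0=4\alpha_0\sin4\alpha_0$, which simplifies this to $c=\epsilon\,g_0(\alpha_0)/\alpha_0\cdot\tfrac12\cdot 2$ or an equivalent form. I would confirm the constant numerically.

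\emph{Boundary value.} For $k_\mathrm{tot}=1$, the optimal single query is one global Grover step, since a lone $G_m$ leaves $\langle\bar b|\cdot\rangle$ unchanged. Equation~\eqref{eq:pr_grover} with $k=1$ then gives
\begin{equation}
\pr=\sin^2 3\theta_1+\frac{b-1}{N-1}\cos^2 3\theta_1\approx \frac{1}{K}+\frac{8}{N}+\dots,
\end{equation}
after expanding $\sin^2 3\theta_1\approx 9/N$ and $(b-1)/(N-1)\approx 1/K-1/N+\dots$. Hence $\E=1/\pr\approx K(1-8K/N)=K-8K^2/N$. This step needs care in tracking the $1/N$ terms consistently.

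\emph{Comparison.} At leading order, the interior branch costs about $0.69\sqrt N$ and the boundary branch about $K=N/b$. The boundary wins when $K\lesssim0.69\sqrt N$, i.e.\ when $b\gtrsim\sqrt N/0.69$, which is roughly $m>n/2$. I expect the main obstacle to be justifying the sharp cutoff at $\lfloor n/2\rfloor$. The $0.69$ factor and the $\sqrt b$ corrections shift the true crossover by $O(1)$ in $\log_2 b$. For $m=n/2$ exactly we have $K=\sqrt N>0.69\sqrt N-0.4\sqrt b$, so the interior branch wins, and for $m=n/2+1$ we have $K=\sqrt N/2<0.69\sqrt N-0.4\cdot\sqrt2\,N^{1/4}$ for large $N$. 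I would verify these two adjacent cases explicitly, together with the odd-$n$ case using $\lfloor n/2\rfloor$, and the argument is monotone in $m$ beyond them. This matches the numerics of Sec.~\ref{sec:num_quest}, where $m>5$ for $n=8$ gives the minimum at $k_\mathrm{tot}=2$ in that counting.
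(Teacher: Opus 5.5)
Your overall plan follows the paper's proof: take the two candidates from Lemma~\ref{lemma:extrem_point}, evaluate each to first order, and compare them.

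\textbf{Interior branch.} Your treatment is tidier than the paper's. Since $g_0'(\alpha_0)=0$, the shift $\varepsilon\gamma$ never enters at order $\gamma$. Using $\tan 2\alpha_0=4\alpha_0$, your coefficient becomes $c=\epsilon/(2\sin^2 2\alpha_0)\approx 0.6849/1.689\approx 0.4054$. The paper instead substitutes $\varepsilon\approx-0.4969$ into its expansion of $g(\alpha_1)$, where the $\varepsilon$-terms cancel by exactly this stationarity relation. Your ``cross-check'' is off by a factor of two: the correct simplification is $c=\epsilon\,g_0(\alpha_0)/(2\alpha_0)$, whereas $\epsilon\,g_0(\alpha_0)/\alpha_0\approx 0.81$.

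\textbf{Boundary branch.} Your computation $\pr_{x=t_1}(1)\approx 1/K+8/N$ coincides with the paper's.

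\textbf{The gap: the threshold.} The step you defer, ``the odd-$n$ case using $\lfloor n/2\rfloor$'', cannot be verified, because the statement is false there. Take odd $n$ and $m=\lfloor n/2\rfloor+1=(n+1)/2$. Then $K=\sqrt{N/2}$ and $\sqrt b=2^{1/4}N^{1/4}$, so the boundary branch costs $K-8K^2/N=\sqrt{N/2}-4$. The interior branch costs about $0.6900\sqrt N-0.482\,N^{1/4}$. This is smaller for all sufficiently large $n$, already at leading order, since $0.6900<1/\sqrt2\approx 0.7071$.

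Concretely, for $(n,m)=(15,8)$, evaluating the exact GRK amplitude with $k_2=8$ and $k_{\mathrm{tot}}=97$ gives $\E\approx 118.3$. The single query $k_{\mathrm{tot}}=1$ gives $\E\approx 124.2$.

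Your own criterion already predicts this. The boundary wins only for $m>n/2+\log_2(1/0.69)\approx n/2+0.535$, and $m=n/2+\tfrac12$ lies on the interior side. The asymptotically correct split is therefore at $\lceil n/2\rceil$, which agrees with $\lfloor n/2\rfloor$ only for even $n$.

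The paper's proof has the same hole. It shows only that the interior branch wins for $m\lessapprox n/2+0.5353$, and then asserts the split at $\lfloor n/2\rfloor$ without treating odd $n$. To get a provable statement, either restrict to even $n$ or replace $\lfloor n/2\rfloor$ by $\lceil n/2\rceil$. With that change, your check of the two adjacent cases plus monotonicity in $m$ goes through.

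\textbf{A smaller point.} The $n=8$ table does not ``match in that counting''. Its enumeration starts at $k_{\mathrm{tot}}=2$, so it omits the single-$G_n$ run. At $m=5$ that run gives $\E\approx 6.57$, below the tabulated $8.56$. It is this omitted run that places the $n=8$ crossover at $m>4$.
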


\begin{proof}
Lemma \ref{lemma:extrem_point} gives two candidate values of $k_\text{tot}$, one of which provides the minimal expected iteration number. Note that the leading term of $k_\text{tot}$ in Eq. \eqref{eq:k_tot_min_E} corresponds precisely to the iteration number that minimizes the expected iteration number of the standard Grover search, as shown in Eq. \eqref{eq:E_min_Grover}. Consequently, we can estimate the minimal expected iteration number as $\E_{x=t_1} \approx 0.69\sqrt N$. The other extremum, $k_\text{tot}=1$, corresponds to a sequence containing only a single global Grover operator, which effectively reduces to random guessing. For this case, the expected iteration number satisfies $\E_{x=t_1} \approx K$. Recalling that $N=2^n$ and $K=2^{n-m}$, it follows that for $m \lessapprox n/2+0.5353$, the minimal expected iteration number $\E_{x=t_1} \approx 0.69\sqrt N$ is smaller than $K$.

The reasoning above indicates that for $m \leq \left\lfloor n/2 \right\rfloor$, the minimum is attained at the total query count given in Eq \eqref{eq:k_tot_min_E}. Using Eq. \eqref{eq:calculate_E(GRK)}, we can express it as $\E^\text{min}_{x=t_1} = g(\alpha)\sqrt N$, where the function $g(\alpha)$ is defined in Eq. \eqref{eq:g_alpha}. The expansion corresponding to the minimum, located at $\alpha_1 = \alpha_0+\varepsilon\gamma$, is as follows
\begin{multline}
    g(\alpha_1) = \frac{\alpha_0}{\sin^2(2\alpha_0)} \\
    + \left(\frac{\varepsilon}{\sin^2(2\alpha_0)} - \frac{2\alpha_0(2\varepsilon+\epsilon)\cos(2\alpha_0)}{\sin^3(2\alpha_0)}\right)\gamma + \mathcal O\left(\gamma^2\right).
\end{multline}
Substituting $\alpha_0 = 0.5829$, $\varepsilon = -0.4969$, and $\epsilon = 0.6849$ gives the minimal expected iteration number shown in Eq. \eqref{eq:expected_k_GRK}.

As for $k_\text{tot}=1$, from Eq. (\ref{eq:pr_grover}), we know the success probability is
\begin{equation}
    \pr_{x=t_1}(1) = 1 - \cos^2(3\theta_1)\frac{\cos^2\gamma}{\cos^2\theta_1}.
\end{equation}
To leading order, we find $\pr_{x=t_1}(1)\approx 8/N + 1/K$. Therefore we obtain $\E_{x=t_1}^{\min}\approx K - 8K^2/N$. 
\end{proof}

\begin{figure}[t]
\centering
\includegraphics[width=\columnwidth]{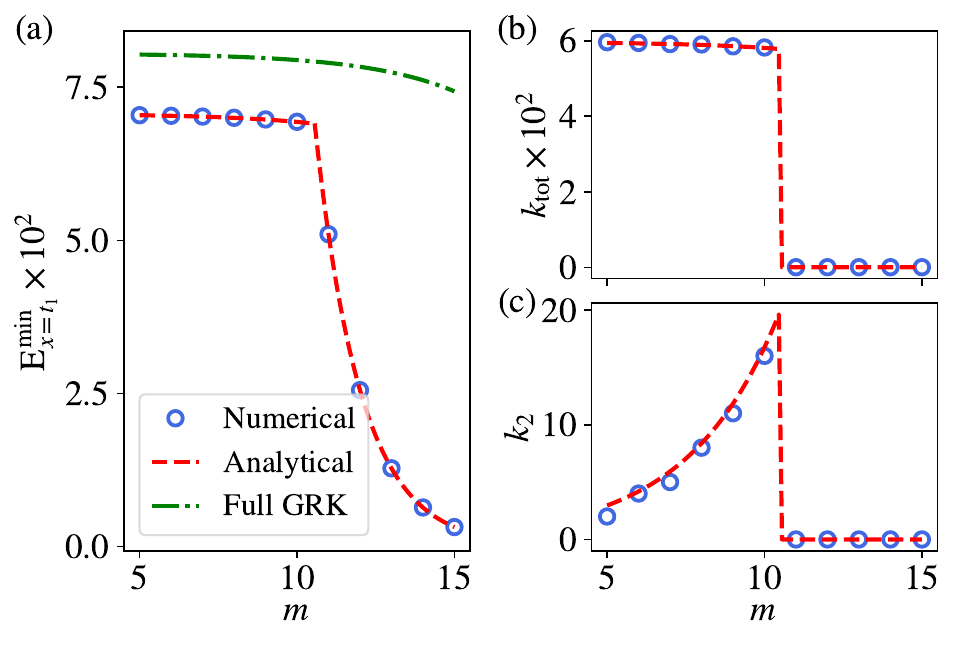}
\caption{(a) Minimal expected iteration number of the quantum partial search algorithm, and the corresponding iteration numbers (b) $k_\text{tot}$ and (c) $k_2$. The red dashed lines show the analytical results established in Theorem \ref{theorem:k_opt_GRK_partial_search}. The green dash-dotted line represents the GRK algorithm with success probability close to unity. The parameter is set as $n=20$.}
\label{Fig:E_min}
\end{figure}

As previously established in Sec. \ref{subsec:partial_search}, the GRK algorithm achieves a success probability approaching one with an iteration number $k_1$ and $k_2$ given by Eq. \eqref{eq:GRK_k1_k2}. The corresponding expected iteration number is $\E_{x=t_1} \approx \pi\sqrt N/4 - 0.3424\sqrt b$. Comparing it with the minimal expected iteration number derived in Eq. \eqref{eq:expected_k_GRK}, a speedup of approximately $0.0954\sqrt N+0.063\sqrt b$ is observed.  The results established in Theorem \ref{theorem:k_opt_GRK_partial_search} are numerically verified in Fig. \ref{Fig:E_min}, which clearly demonstrates a speedup compared to the GRK with success probability approaching one, and confirms a phase-transition-like behavior near $m=n/2$.

However, Theorem~\ref{theorem:k_opt_GRK_partial_search} also reveals that when the number of blocks \(K\) becomes smaller than approximately \(\sqrt{N}\), the boundary strategy \(k_{\mathrm{tot}}=1\), namely a single global Grover operation followed by measurement, can already achieve a smaller expected number of iterations than the interior GRK-based partial-search branch. In this regime, the problem is effectively close to guessing among \(K\) blocks, because the number of possible blocks has become sufficiently small. The apparent abrupt change near \(m=n/2\) therefore originates from the competition between two distinct optimization strategies: the GRK partial-search algorithm and the random guessing strategy.

\section{\label{sec:exact_bounds_parallel}Applications to quantum parallel search}

In this section, we begin by reviewing the inner and outer parallelization schemes of the quantum search algorithm in Sec. \ref{subsec:direct_parallel}. We then introduce a parallel scheme based on the partial search framework and analyze its efficiency in Sec. \ref{subsec:block_parallel}. Finally, we present an improved hybrid parallel strategy in Sec. \ref{subsec:mixed_parallel}.

\subsection{Inner and outer parallel search schemes}

\label{subsec:direct_parallel}

Suppose that we have $l$ QPUs (quantum processing unit) and wish to parallelize the search algorithm. Existing parallelization schemes for quantum search can be broadly categorized into two types, as described in \cite{kim2018time}.
\begin{definition}
\label{def:inner}
    Inner parallel scheme: The search space is partitioned into $l$ equally sized subsets, with each of the $l$ QPUs tasked with searching its assigned subset for the target state.
\end{definition}
\begin{definition}
\label{def:outer}
    Outer parallel scheme: Each of the $l$ QPUs runs the same Grover's algorithm, with the overall process halting as soon as any one processor finds the target state.
\end{definition}

The inner parallel search has a speedup because each quantum computer runs a search algorithm with the search space $N/l$. The target state is in one of $l$ sets. Therefore the expected iteration number is \footnote{When evaluating the speedup of a parallel quantum search scheme, the overall algorithm runtime can be approximated by the number of oracle queries executed on a single QPU. Thus, the term expected iteration number for a parallel search refers to the oracle count performed by one QPU. } 
\begin{equation}
\label{eq:E_inner}
\E_{\text{inner},\parallel}(k) = \frac{k}{\sin^2((2k+1)\theta_1')},
\end{equation}
with $\sin\theta_1' = \sqrt{l/N}$. The minimum, similar as Eq. \eqref{eq:E_min_Grover}, is given by
\begin{equation}
\label{eq:E_min_inner}
    \E^{\text{min}}_{\text{inner},\parallel}\approx 0.69\sqrt{\frac{N}{l}}.
\end{equation}
The inner parallel scheme is also called the multi-programming Grover's search algorithm \cite{park2023quantum}. Taking $l=2$ as an example, the quantum circuit is
$$
\begin{quantikz}[column sep=0.25cm, row sep=0.25cm]
\lstick[2]{1st QPU} \wireoverride{n}  & \wireoverride{n}  & \wireoverride{n}  & \wireoverride{n}  & \wireoverride{n}  & \lstick{$|0\rangle$} \wireoverride{n} & & \gate[2]{G_{n-1}^{k}} &  & \wireoverride{n}  \\
\wireoverride{n}  & \wireoverride{n}  & \wireoverride{n}  & \wireoverride{n}  & \wireoverride{n} & \lstick{$|0\rangle^{\otimes n-1}$} \wireoverride{n} & \gate{H^{\otimes n-1}} & & \meter{n-1} & \wireoverride{n} \\
\lstick[2]{2nd QPU} \wireoverride{n}  & \wireoverride{n}  & \wireoverride{n}  & \wireoverride{n}  & \wireoverride{n} & \lstick{$|1\rangle$} \wireoverride{n} & & \gate[2]{G_{n-1}^{k}} & & \wireoverride{n} \\
\wireoverride{n}  & \wireoverride{n}  & \wireoverride{n}  & \wireoverride{n}  & \wireoverride{n} & \lstick{$|0\rangle^{\otimes n-1}$}\wireoverride{n} & \gate{H^{\otimes n-1}} & & \meter{n-1} & \wireoverride{n} \\
\end{quantikz}
$$
Here $H$ is the single-qubit Hadamard gate \cite{nielsenQuantumComputationQuantum2010}. Note that the first qubit, being in the state $|0\rangle$ or $|1\rangle$, stands for the two partition sets. Although we can partition the search space, the oracle remains the original oracle operator. Only the diffusion operator has been rescaled to act on $n-1$ qubits. 

In the outer parallel scheme, each QPU conducts an independent and simultaneous search over the entire database \cite{Gingrich2000Generalized}. The search is considered successful as soon as any QPU locates a target state. Consequently, the overall success probability is $1-\left(1-\pr(k)\right)^l$, where the success probability of Grover’s algorithm $\pr(k)$ is given by Eq.\eqref{eq:success_pr_of_Grover}. Thus, the expected iteration number for the above algorithm is
\begin{equation}
\label{eq:E_outer}
    \E_{\text{outer},\parallel}(k) = \frac{k}{1-\left(1-\sin^2((2k+1)\theta_1)\right)^l}.
\end{equation}
When $l$ is large, the iteration number $k$ per QPU typically remains small. Under this condition, the expected iteration number for the outer parallel scheme can be approximated as
\begin{equation}
\E_{\text{outer},\parallel}(x) \approx \frac{x}{2(1-e^{-x^2})}\sqrt{\frac{N}{l}},
\end{equation}
where $x=(1+2k)\sqrt{l/N}$. Correspondingly the minimal expected iteration number is
\begin{equation}
\label{eq:E_outer_parallel}
    \E^\text{min}_{\text{outer},\parallel} \approx0.7835\sqrt{\frac{N}{l}}. 
\end{equation}
As the example on $l=2$, the corresponding quantum circuit is
$$
\begin{quantikz}[column sep=0.27cm]
\lstick{1st QPU} \wireoverride{n} & \wireoverride{n}  & \wireoverride{n} \wireoverride{n}  & \wireoverride{n} & \lstick{$|0\rangle^{\otimes n}$} \wireoverride{n} & \gate{H^{\otimes n}} & \gate{G_n^k} & \meter{n} \\
\lstick{2nd QPU} \wireoverride{n} & \wireoverride{n}  & \wireoverride{n} \wireoverride{n}  & \wireoverride{n} & \lstick{$|0\rangle^{\otimes n}$} \wireoverride{n} & \gate{H^{\otimes n}} & \gate{G_n^k} & \meter{n} \\
\end{quantikz}
$$
Although the inner parallel search is faster than the outer parallel search, the outer scheme is more flexible since $l$ could take any value. The inner scheme can only have $l = 2^r$ with $r\in \mathbb{N}^+$ and \(l \le N\).

\subsection{GRK‑based parallel search scheme}

\label{subsec:block_parallel}

The essence of the inner parallel scheme is to use Grover's algorithm to extract partial information about the target state. This naturally raises the question of whether the quantum partial search algorithm can be employed in parallel to achieve a further speedup.
\begin{definition}
\label{def:GRK}
    GRK‑based parallel scheme: Assume that $l = n/(n-m)$ is an integer. Each of the $l$ QPUs employs the GRK algorithm to determine $n-m$ bits of the target string.
\end{definition}
Recall that the GRK algorithm, with its local diffusion operator acting on $m$ qubits, identifies the $n-m$ bits of the target string (i.e., the qubits not acted upon by the local diffusion operator). These $n-m$ bits serve to label the target block. The specific choice of which $n-m$ bits define the block is, however, arbitrary. If each QPU successfully identifies a distinct set of $n-m$ bits, their results can be assembled to reconstruct the full solution. Consequently, the overall success probability for the parallel scheme is $\pr^l_{x=t_1}(k_1,k_2)$, where $\pr_{x=t_1}(k_1,k_2)$ is given by Eq. (\ref{eq:pr_partial_k1_k2}). The corresponding expected iteration number is then
\begin{equation}
    \E_{\text{GRK},\parallel}(k_\text{tot}) = \frac{1+k_1+k_2}{\pr^l_{x=t_1}(k_1,k_2)},
\end{equation}
which has the minimum
\begin{equation}
    \E^\text{min}_{\text{GRK},\parallel} = \min_{k_\text{tot}}\left\{\E_{\text{GRK},\parallel}(k_\text{tot})\right\}.
\end{equation}
Here the minimization is to find $k_\text{tot}$ and the corresponding portion of $k_1$ and $k_2$ which minimizes $\E_{\text{GRK},\parallel}$. 

The quantum circuit diagram with $l=2$ is (assuming that $n$ is an even number)
$$
\begin{quantikz}[column sep=0.2cm, row sep=0.25cm]
\lstick[2]{1st QPU} \wireoverride{n} & \wireoverride{n} & \wireoverride{n}  & \wireoverride{n}  & \wireoverride{n}  & \wireoverride{n}  & \lstick{$|0\rangle^{\otimes n/2}$} \wireoverride{n} & \gate{H^{\otimes{n/2}}} & \gate[2]{G_n^{k_1}} & \qw & \gate[2]{G_n} & \meter{n/2} & \wireoverride{n}  \\
\wireoverride{n} & \wireoverride{n} & \wireoverride{n}  & \wireoverride{n}  & \wireoverride{n}  & \wireoverride{n} & \lstick{$|0\rangle^{\otimes n/2}$} \wireoverride{n} & \gate{H^{\otimes{n/2}}} & & \gate{G_{n/2}^{k_2}} & & & \wireoverride{n} \\
\lstick[2]{2nd QPU} \wireoverride{n} & \wireoverride{n} & \wireoverride{n}  & \wireoverride{n}  & \wireoverride{n}  & \wireoverride{n} & \lstick{$|0\rangle^{\otimes n/2}$} \wireoverride{n} & \gate{H^{\otimes{n/2}}} & \gate[2]{G_n^{k_1}} & \gate{G_{n/2}^{k_2}} & \gate[2]{G_n} & & \wireoverride{n} \\
\wireoverride{n} & \wireoverride{n} & \wireoverride{n}  & \wireoverride{n}  & \wireoverride{n}  & \wireoverride{n} & \lstick{$|0\rangle^{\otimes n/2}$}\wireoverride{n}  & \gate{H^{\otimes{n/2}}} & & & & \meter{n/2} & \wireoverride{n} \\
\end{quantikz}
$$
The first QPU finds the first half bits of the target string, while the second QPU finds the rest bits. 

As the efficiency of GRK‑based parallel search algorithm, we compare it with the outer parallel search algorithm and have the following theorem.
\begin{theorem}
\label{thm:grk_parallel}
    At the large database limit $n\gg 1$, outer parallel search algorithm always outperforms GRK‑based parallel search algorithm for any $l$, namely $\E^\mathrm{min}_{\mathrm{outer},\parallel}<\E^\mathrm{min}_{\mathrm{GRK},\parallel}$.
\end{theorem}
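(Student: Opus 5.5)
Since the GRK-based scheme uses the GRK family on each QPU, we can invoke the asymptotic success probability of Theorem~\ref{theorem:pr_GRK}. Write $k_\mathrm{tot}=1+\alpha\sqrt N$, so each QPU succeeds with probability at most $p(\alpha)=\sin^2(2\alpha)+\epsilon\gamma\sin(4\alpha)+\mathcal O(\gamma^2)+\mathcal O(b^{-1/2})$. Here $b=2^m$ and $K=2^{n-m}=2^{n/l}$, so both $b,K\gg1$ when $n\gg1$ at fixed $l$; the case $l=1$ corresponds to a single QPU. We then need
\begin{equation*}
\E^\mathrm{min}_{\mathrm{GRK},\parallel}\approx \sqrt N\,\min_\alpha \frac{\alpha}{p(\alpha)^l}.
\end{equation*}
The key observation is that the $l$ QPUs must all succeed, so the success probability enters as the $l$-th power and does not combine through $1-(1-p)^l$. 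The GRK scheme also still works with rotation angle $\theta_1\approx 1/\sqrt N$ over the full space, so its cost scales with $\sqrt N$. The outer scheme, by Eq.~\eqref{eq:E_outer_parallel}, costs only $0.7835\sqrt{N/l}$.

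The plan is to bound the GRK expression from below by something of order $\sqrt N$ with an $l$-independent constant. Since $p(\alpha)\le 1$, we have $\alpha/p^l\ge \alpha/p$. Hence $\E^\mathrm{min}_{\mathrm{GRK},\parallel}\ge \E^\mathrm{min}_{x=t_1}$ from Theorem~\ref{theorem:k_opt_GRK_partial_search}, which is $0.69\sqrt N-0.4054\sqrt b$ in the regime $m\le\lfloor n/2\rfloor$ and $\approx K$ otherwise.

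For $l\ge2$ we have $m=n(1-1/l)\ge n/2$. In the borderline case $m=n/2$ (that is, $l=2$), the first branch gives $\approx(0.69-0.4054\,N^{-1/4})\sqrt N$, which exceeds $0.7835\sqrt{N/2}\approx0.554\sqrt N$. When $m>n/2$, the minimum might instead sit at the boundary $k_\mathrm{tot}=1$ with cost $\approx K/(\text{prob})^l$. In that case $\pr_{x=t_1}(1)\approx 1/K+8/N$, so the cost is $\approx K^l=N\gg\sqrt{N/l}$. For the interior branch, $\alpha/\sin^{2l}(2\alpha)$ grows with $l$, so its minimum is at least the $l=1$ value $\approx0.69\sqrt N$. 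Each such bound is of order $\sqrt N$, while the outer cost decreases like $\sqrt{N/l}$, which settles every $l\ge2$. For completeness, at $l=1$ the comparison is $0.69\sqrt N$ versus $0.7835\sqrt N$. This would favor GRK, but there $m=0$ and the partial search degenerates into full search, so the theorem should be read for $l\ge2$.

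The main obstacle is making the minimization over $\alpha$ uniform. The error terms $\mathcal O(\gamma)$ and $\mathcal O(b^{-1/2})$ need $\gamma=2^{-n/(2l)}\to0$, which fails if $l$ grows like $n$, that is, when $K$ stays bounded. I would handle that regime separately. Since $\pr_{x=t_1}\le 1$ for any sequence, the product scheme still needs per-QPU success close to 1. I would use the $\sqrt b$-bounded speedup of partial search~\cite{grover2004partial} to argue that achieving success bounded away from zero already costs $\Omega(\sqrt N)-\mathcal O(\sqrt b)$ queries, while the outer scheme costs $\mathcal O(\sqrt{N/l})$. Carrying out this uniform lower bound, rather than the fixed-$l$ asymptotics, is the step I expect to need the most care.
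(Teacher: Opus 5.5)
\emph{Verdict: the fixed-$l$ argument is correct, but the regime where $l$ grows with $n$ (including $l=n$) is not proven.}

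\textbf{What works.} Your fixed-$l$ argument is sound, and it is cleaner than the paper's route. The paper minimizes $\alpha\sqrt N/(\sin^2 2\alpha+\epsilon\gamma\sin 4\alpha)^l$ perturbatively at the two ends, $l=2$ (obtaining $0.7422\sqrt N$) and $l=n$, and then extrapolates to intermediate $l$. You instead use $p\le 1$ to get an $l$-independent floor of about $0.69\sqrt N$:
\begin{itemize}
\item at $l=2$, via $\E^{\min}_{\mathrm{GRK},\parallel}\ge\E^{\min}_{x=t_1}$ and Theorem~\ref{theorem:k_opt_GRK_partial_search};
\item for $l\ge3$, via the boundary value $\approx K^l=N$ together with $\alpha/\sin^{2l}2\alpha\ge\alpha/\sin^2 2\alpha\ge0.69$ in the interior.
\end{itemize}
Since $0.7835\sqrt{N/l}\le0.554\sqrt N$, this settles every fixed $l\ge2$ at once, with no extrapolation.

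Three minor caveats. First, the term $+\epsilon\gamma\sin4\alpha$ you drop raises $p$, so the interior floor is really $0.69-\mathcal O(l\gamma)$; this is harmless at fixed $l$. Second, the boundary/interior split skips the range $1<k_{\mathrm{tot}}\lesssim\sqrt b$, where $p$ is not close to $\sin^2 2\alpha$. Third, $0.7835\sqrt{N/l}$ is a large-$l$ approximation that overestimates the exact outer cost, because $\cos^{2l}\phi\le e^{-l\phi^2}$. At $l=1$ the exact outer cost is $0.69\sqrt N$, so nothing actually favors GRK there.

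\textbf{The gap: $l$ growing with $n$.} The statement says ``for any $l$'', and the paper treats $l=n$ ($K=2$) explicitly as its second extreme case. You are right that Theorem~\ref{theorem:pr_GRK} cannot be used there, because $\gamma=2^{-n/(2l)}$ does not vanish. In fact the paper's own $l=n$ step plugs $\gamma\approx2^{-1/2}$ into that expansion, and $\sin^2 2\alpha+2^{-1/2}\epsilon\sin4\alpha$ then exceeds $1$ (e.g.\ at $\alpha=1/2$), so the paper gives you no valid substitute.

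Your replacement is only sketched, and as stated it fails in two ways.
\begin{itemize}
\item ``$\Omega(\sqrt N)-\mathcal O(\sqrt b)$'' is vacuous when $K=\mathcal O(1)$, since $\sqrt b=\sqrt{N/K}$ is itself of order $\sqrt N$. You would need the explicit constant, $\frac{\pi}{4}(\sqrt N-\sqrt b)=\frac{\pi}{4}(1-K^{-1/2})\sqrt N$.
\item A minimal-query bound at success probability near $1$ does not control $\min_k k/p(k)^l$. That minimum could sit at small $k$ with moderate $p$. Also, ``success bounded away from zero'' is free for small $K$: random guessing already gives $p=1/2$ when $K=2$.
\end{itemize}
What you need is an upper bound on $p(k)$ that holds for every $k$.

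\textbf{One way to close it.} Use the BBBV hybrid argument. Applied to $G_nG_m^{k_2}G_n^{k_1}$, whose block-success probability is the same for every target, it gives $\sqrt{p(k)}\le K^{-1/2}+2k/\sqrt N$ after $k$ queries. Set $k_R:=\frac{1-K^{-1/2}}{2}\sqrt N$. The function $k\mapsto k\,(K^{-1/2}+2k/\sqrt N)^{-2l}$ first increases and then decreases, so on $1\le k\le k_R$ its minimum is at an endpoint:
\begin{itemize}
\item at $k=1$ the value is $\approx K^l=N$;
\item at $k_R$ the value is $k_R$, and for $k\ge k_R$ the bound $p\le1$ gives a cost of at least $k\ge k_R$.
\end{itemize}
Hence $\E^{\min}_{\mathrm{GRK},\parallel}\gtrsim\frac{1-2^{-1/2}}{2}\sqrt N\approx0.146\sqrt N$, uniformly in $l$. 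This exceeds $0.7835\sqrt{N/l}$ for all $l\ge29$. Your fixed-$l$ argument then handles the finitely many values $2\le l\le28$ as $n\to\infty$. The same bound also covers the intermediate-$k$ range noted in the caveats.
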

\begin{proof}
    From Theorem \ref{theorem:pr_GRK}, we obtain the expected iteration number of GRK‑based parallel search algorithm
    \begin{equation}
        \E_{\text{GRK},\parallel}(\alpha) \approx \frac{\alpha\sqrt N}{\left(\sin^2(2\alpha) + \epsilon\gamma\sin(4\alpha)\right)^l},
    \end{equation}
    with $k_1+k_2=\alpha\sqrt N$. Here the angle $\gamma$ defined in Eq. \eqref{eq:define_theta2_gamma} equals to $\gamma\approx 1/\sqrt{2^{n/l}}$ due to $l = n/(n-m)$. We aim to find the minimal expected iteration number of GRK‑based parallel search algorithm given by
    \begin{equation}
        \E^\text{min}_{\text{GRK},\parallel} =\min_{\alpha} \left\{\frac{\alpha\sqrt N}{\left(\sin^2(2\alpha) + \epsilon\gamma\sin(4\alpha)\right)^l}\right\}.
    \end{equation}
    Below we analytically derive $\E^\text{min}_{\text{GRK},\parallel}$ for the two extreme cases, i.e., $l=2$ and $l=n$. Values for other $l$ can then be extrapolated accordingly.

    As for $l=2$ (equivalent to $m=n/2$), we have
    \begin{equation}
        \E_{\text{GRK},\parallel}(l=2) \approx \frac{\alpha\sqrt N}{\left(\sin^2(2\alpha) +  2^{-n/4} \epsilon\sin(4\alpha)\right)^2}.
    \end{equation}
    Note that the term $2^{-\frac n 4} \epsilon$ (with $\epsilon\approx 0.6849$) is small in the large database limit $n\gg 1$. Then we can adopt the perturbation method to find the minimum. Specifically, in the first leading order, the minimum is given by
    \begin{equation}
        \alpha^* = \alpha_0 - \frac{64\alpha_0^2+1}{128\alpha_0^2-6}\frac{\epsilon}{2^{n/4}} + \mathcal O\left(\frac{1}{2^{n/2}}\right),
    \end{equation}
    where $\alpha_0$ is given by $\tan(2\alpha_0) = 8\alpha_0$. Correspondingly, the minimal expected iteration number is
    \begin{equation}
        E^\text{min}_{\text{GRK},\parallel}(l=2)  = \frac{2^{n/2}}{\sin^4(2\alpha_0)}\left(\alpha_0-\frac{\epsilon}{2^{n/4+1}}\right) + \mathcal O\left(1\right).
    \end{equation}
    The approximated numerical values are $\alpha^* \approx 0.696-0.3913/\sqrt[4]{N}$ and
    \begin{equation}
        E^\text{min}_{\text{GRK},\parallel}(l=2)  \approx 0.7422\sqrt N - 0.3651\sqrt[4]{N}.
    \end{equation}
    Comparing it with the outer parallel efficiency $\E^\text{min}_{\text{outer},\parallel}$, we always find that  $E^\text{min}_{\text{GRK},\parallel}(l=2)>\E^\text{min}_{\text{outer},\parallel}(l=2)$ in the large database limit.

    As for $l=n$ (equivalent to $m=n-1$), we have
    \begin{equation}
        \E_{\text{GRK},\parallel}(l=n) \approx \frac{\alpha\sqrt N}{\left(\sin^2(2\alpha) +  2^{-1/2} \epsilon\sin(4\alpha)\right)^l}.
    \end{equation}
    As $l$ increases, the minimum is given by larger $\alpha$ implying a low efficiency of parallelism. In the large database limit $n\gg 1$, we have
    \begin{equation}
        E^\text{min}_{\text{GRK},\parallel}(l=n)  \approx \left(0.5433-\frac{1}{4\pi n}\right)\sqrt N.
    \end{equation}
    Apparently we have $E^\text{min}_{\text{GRK},\parallel}(l=n)>\E^\text{min}_{\text{outer},\parallel}(l=n)$. Thus we conclude that outer parallel search algorithm always outperforms GRK‑based parallel search algorithm.
\end{proof}
One reason the GRK‑based parallel search algorithm discussed here fails to improve existing results is requiring all QPUs to successfully find the partial target bits.

\subsection{Hybrid parallel search scheme}

\label{subsec:mixed_parallel}

Theorem \ref{theorem:pr_GRK} indicates that the maximal success probability of the quantum partial search can be decomposed into two components: the probability of finding the full target state and the probability of locating a non-target state within the target block. This reveals that the GRK algorithm inherently retains the capability to identify the full target state. Motivated by this property, we can combine the outer parallel scheme with the GRK-based parallel scheme, leading to what we term the hybrid parallel search scheme.
\begin{definition}
\label{def:hybrid}
    Hybrid parallel search scheme: Assume that $l = n/(n-m)$ is an integer. Each of the $l$ QPUs executes the GRK algorithm. The measurement outcome of each QPU is then verified using a classical oracle to determine whether it corresponds to the target state. Simultaneously, the block-label qubits obtained from all QPUs are combined, and the resulting candidate string is also checked via a classical oracle to confirm whether it matches the full target state.
\end{definition}

The essential difference between the hybrid scheme and the outer parallel scheme is not the GRK operator itself, but the additional verification channel created by combining partial information from different QPUs. In the outer parallel scheme, a run is successful only if at least one QPU directly outputs the full target string. In the hybrid scheme, there is an additional possibility: even if no individual QPU outputs the full target string, the block information obtained from different QPUs may be concatenated into the correct full string and then verified by the classical oracle. This extra block-concatenation channel is the structural origin of the hybrid improvement.

It is worth noting that the hybrid parallel search scheme requires that \(l\) divides \(n\); by contrast, the outer parallel scheme is completely free of this restriction. If implemented as explicit bitwise comparisons, the cost is \(\mathcal{O}(ln)\), which remains negligible compared with the \(\mathcal{O}(\sqrt{N})\) quantum oracle-query cost in the asymptotic comparison.

The quantum circuit for the hybrid parallel search algorithm is essentially the same as that of the GRK‑based parallel search, with the only modification being that all qubits are measured. As an example, for $l=2$, the corresponding quantum circuit diagram is shown below.
$$
\begin{quantikz}[column sep=0.2cm, row sep=0.4cm]
\lstick[2]{1st QPU} \wireoverride{n}  & \wireoverride{n} & \wireoverride{n} & \wireoverride{n}  & \wireoverride{n}  & \wireoverride{n}  & \lstick{$|0\rangle^{\otimes n/2}$} \wireoverride{n} & \gate{H^{\otimes n/2}} & \gate[2]{G_n^{k_1}} & \qw & \gate[2]{G_n} & \meter{n/2} & \wireoverride{n}  \\
\wireoverride{n} & \wireoverride{n} & \wireoverride{n}  & \wireoverride{n}  & \wireoverride{n}  & \wireoverride{n} & \lstick{$|0\rangle^{\otimes n/2}$} \wireoverride{n} & \gate{H^{\otimes n/2}} & & \gate{G_{n/2}^{k_2}} & & \meter{n/2} & \wireoverride{n} \\
\lstick[2]{2nd QPU} \wireoverride{n} & \wireoverride{n} & \wireoverride{n}  & \wireoverride{n}  & \wireoverride{n}  & \wireoverride{n} & \lstick{$|0\rangle^{\otimes n/2}$} \wireoverride{n} & \gate{H^{\otimes n/2}} & \gate[2]{G_n^{k_1}} & \gate{G_{n/2}^{k_2}} & \gate[2]{G_n} & \meter{n/2} & \wireoverride{n} \\
\wireoverride{n} & \wireoverride{n} & \wireoverride{n}  & \wireoverride{n}  & \wireoverride{n}  & \wireoverride{n} & \lstick{$|0\rangle^{\otimes n/2}$}\wireoverride{n}  & \gate{H^{\otimes n/2}} & & & & \meter{n/2} & \wireoverride{n} \\
\end{quantikz}
$$
The measured $n$-bit outcomes from the first or second QPUs are separately submitted to a classical oracle for verification. Additionally, the first half ($n/2$ bits) measured from the first QPU is combined with the second half ($n/2$ bits) measured from the second QPU, and this composite $n$-bit string is also checked via the classical oracle. 

Accordingly, the success probability of one run of the hybrid parallel search
can be obtained by separating the measurement outcomes of each QPU into three
exclusive cases. A QPU outputs the full target state with probability
\(\pr_{x=t}(k_1,k_2)\); it outputs a non-target state inside the target block
with probability
\(\pr_{x=t_1}(k_1,k_2)-\pr_{x=t}(k_1,k_2)\); and it outputs a state outside
the target block with probability \(1-\pr_{x=t_1}(k_1,k_2)\). Thus the hybrid scheme succeeds either if at least one QPU directly outputs the full target state, or if no QPU outputs the full target state but all QPUs still output states in their corresponding target blocks, so that the block information can be concatenated into the correct target string. These two events are disjoint. Therefore, the expected iteration number of the hybrid parallel search is
\begin{multline}
    \E_{\text{hybrid},\parallel}(k_\text{tot}) \\
    =
    \frac{1+k_1+k_2}
    {1-\left(1-\pr_{x=t}\right)^l
    +\left(\pr_{x=t_1}-\pr_{x=t}\right)^l}.
\end{multline}
Note that the success probability of identifying the
target block in the GRK algorithm \(\pr_{x=t_1}(k_1,k_2)\) is given by Eq.~\eqref{eq:pr_partial_k1_k2}
(the asymptotic expression is given by Eq.~\eqref{eq:pr_GRK} in
Theorem~\ref{theorem:pr_GRK}). The success probability of finding the full target state from the same GRK
sequence \(\pr_{x=t}(k_1,k_2)\) is given by 
\begin{equation}
    \pr_{x=t}(k_1,k_2)
    =
    \left|\langle t|G_n G_m^{k_2}G_n^{k_1}|s_n\rangle\right|^2 .
\end{equation}
The minimal expected iteration number is then given by
\begin{equation}
\label{eq:E_hybrid_min}
    \E^\text{min}_{\text{hybrid},\parallel}
    =
    \min_{k_\text{tot}}
    \left\{\E_{\text{hybrid},\parallel}(k_\text{tot})\right\}.
\end{equation}

Comparing it with the outer parallel search algorithm defined in Definition \ref{def:outer}, we have the following theorem.
\begin{theorem}
\label{thm:hybrid_outer}
    Hybrid parallel search algorithm always outperforms outer parallel search algorithm, namely $\E^\mathrm{min}_{\mathrm{hybrid},\parallel}<\E^\mathrm{min}_{\mathrm{outer},\parallel}$. 
\end{theorem}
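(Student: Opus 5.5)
The cleanest route is a direct comparison argument that avoids solving either minimization explicitly. Fix $l$ with $l\mid n$, set $m=n-n/l$, and let $k^*$ be the per-QPU query count that attains $\E^{\min}_{\mathrm{outer},\parallel}$ in Eq.~\eqref{eq:E_outer}. The outer scheme is the special case of the hybrid scheme with $k_2=0$, i.e. the sequence $G_n^{k^*}$. In that case $\pr_{x=t}(k_1,0)=\sin^2((2k^*+1)\theta_1)$, so the hybrid denominator equals the outer denominator plus the nonnegative term $(\pr_{x=t_1}-\pr_{x=t})^l$. By Eq.~\eqref{eq:pr_grover} this extra term equals $\bigl(\tfrac{b-1}{N-1}\cos^2((2k^*+1)\theta_1)\bigr)^l$, which is strictly positive unless $(2k^*+1)\theta_1=\pi/2$ exactly. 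Hence, at the same query count,
\begin{equation}
\E_{\text{hybrid},\parallel}\le \E_{\text{outer},\parallel}(k^*)=\E^{\min}_{\text{outer},\parallel},
\end{equation}
with strict inequality whenever the cosine factor is nonzero.

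To turn this into strict inequality, I will argue that the optimal outer $k^*$ never sits at full rotation. The asymptotic minimizer $x=(1+2k^*)\sqrt{l/N}$ of $x/(2(1-e^{-x^2}))$ is about $1.12\sqrt{l/N}\cdot$const. So $(2k^*+1)\theta_1\approx x\sqrt{1/l}<\pi/2$ for $l\ge2$, and the cosine is bounded away from zero. Strict improvement therefore already follows from the $k_2=0$ member of the family, since the minimization in Eq.~\eqref{eq:E_hybrid_min} ranges over all $k_1,k_2$ with $k_\text{tot}=1+k_1+k_2$ and includes this point. A subtlety is the final-$G_n$ bookkeeping: the hybrid family is parameterized as $G_nG_m^{k_2}G_n^{k_1}$, so $k_2=0$ gives $G_n^{k_1+1}$ with $k_1+1$ queries, which matches the outer count exactly.

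This trivial bound gives only an $\mathcal O((b/N)^l)$ gain. To exhibit the ``subleading but strict'' improvement mentioned in the abstract, I will then take $k_2=\pi\sqrt b/6$ as suggested by Theorem~\ref{theorem:pr_GRK} and expand perturbatively. Write $\pr_{x=t}=\sin^2(2\alpha)+\delta$, where $\delta$ is the change in the target amplitude caused by the local stage; I will compute it from the three-dimensional representation \eqref{eq:matrix_GnGm} to first order in $\gamma$. Then $\pr_{x=t_1}-\pr_{x=t}\approx \epsilon\gamma\sin(4\alpha)-\delta$. Expanding the hybrid denominator around the outer optimum $\alpha$ gives a first-order gain of $-l(1-\sin^2 2\alpha)^{l-1}\delta+(\cdot)^l$, which I will compare with the outer minimum.

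The main obstacle is that the GRK local stage may lower $\pr_{x=t}$, since $\delta<0$ is possible, while raising only the in-block mass. The trade-off therefore has to be controlled uniformly in $l$. This is why I rest the theorem itself on the robust $k_2=0$ embedding argument above and treat the GRK choice only as quantifying the improvement, with the detailed comparison deferred to Appendix~\ref{sec:comparing_hybrid_outer}.
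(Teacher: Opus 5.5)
Your proof is correct and is essentially the paper's. Both arguments realize the outer scheme as the $k_2=0$ member of the GRK family, so at equal query count the hybrid denominator exceeds the outer one by the nonnegative block-concatenation term $(\pr_{x=t_1}-\pr_{x=t})^l$; your added check that $\cos((2k^*+1)\theta_1)\neq 0$ at the outer optimizer is a real refinement the paper skips, since strictness actually fails for $N=4$, where $k=1$ already succeeds with certainty and both minima equal $1$. The GRK perturbative add-on is unnecessary and its premise is off: with $2k_2\theta_2=\pi/3$ the local stage multiplies $\pr_{x=t}$ by the $\mathcal O(1)$ factor $\cos^2(2\beta)=1/4$ rather than shifting it by a first-order $\delta$, which is why Appendix~\ref{sec:comparing_hybrid_inner} finds that $k_2\neq 0$ cannot beat the $k_2=0$ branch at leading order.
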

\begin{proof}
    We set $k_2=0$, which corresponds to the standard Grover algorithm, as a special case of the GRK sequence. At the condition $k_2=0$, we have
\begin{multline}
\label{eq:E_hybrid_l_2}
    \E_{\text{hybrid},\parallel}(k)
    \\=
    \frac{k}
    {1-\left(1-\pr_{x=t}(k)\right)^l
    +\left(\pr_{x=t_1}(k)-\pr_{x=t}(k)\right)^l},
\end{multline}
where $\pr_{x=t_1}(k)$ is the success probability using Grover's algorithm to find the target block given by Eq. \eqref{eq:pr_grover}; $\pr_{x=t}(k)$ is the success probability of Grover's algorithm given by Eq. \eqref{eq:success_pr_of_Grover}. Comparing it with the expected iteration number of outer parallel search algorithm $\E_{\text{outer},\parallel}(k)$, given by \eqref{eq:E_outer}, we have
    \begin{equation}
        \E_{\text{hybrid},\parallel}(k) < \E_{\text{outer},\parallel}(k),
    \end{equation}
    because the overall success probability of hybrid parallel search algorithm (the denominator of $\E_{\text{hybrid},\parallel}(k)$) is larger than that of the outer parallel search algorithm (the denominator of $\E_{\text{outer},\parallel}(k)$). Therefore, we conclude $\E^\mathrm{min}_{\mathrm{hybrid},\parallel}<\E^\mathrm{min}_{\mathrm{outer},\parallel}$. 
\end{proof}

The superiority of the hybrid parallel search algorithm over the outer parallel scheme stems primarily from its ability to combine the measured block-label qubits from each QPU, thereby increasing the overall probability of identifying the target state. Naturally, as the number of parallel units $l$ increases, the efficiency of the hybrid parallel scheme can be expected to approach that of the outer parallel search scheme. We show the proof in Appendix \ref{sec:comparing_hybrid_outer}.

\begin{table}[t]
\centering

\caption{Comparison between the outer and hybrid parallel schemes for \(l=2\). Here \(m=n/2\), and the hybrid scheme is optimized within the GRK sequence. The percentage reduction is defined as 
\(P=(\E^{\min}_{\mathrm{outer},\parallel}
-\E^{\min}_{\mathrm{hybrid},\parallel})/
\E^{\min}_{\mathrm{outer},\parallel}\times 100\%\).}
\begin{tabular}{c|c|c|c|c}
\hline\hline
~~\(n\)~~ & ~~\(m\)~~ & ~~\(\E^{\min}_{\mathrm{outer},\parallel}\)~~ 
& ~~\(\E^{\min}_{\mathrm{hybrid},\parallel}\)~~ & ~~\(P(\%)\)~~ \\
\hline
4  & 2 & 1.3852  & 1.3642  & 1.5175 \\
6  & 3 & 3.5117  & 3.4792  & 0.9245 \\
8  & 4 & 7.7234  & 7.7158  & 0.0993 \\
10 & 5 & 16.0866 & 16.0816 & 0.0313 \\
\hline\hline
\end{tabular}\label{tab:hybrid_outer_l2}
\end{table}

\begin{table}[t]
\centering

\caption{Comparison between the outer and hybrid parallel schemes for \(l=3\). 
Here \(m=2n/3\), and the hybrid scheme is optimized within the GRK sequence. }
\begin{tabular}{c|c|c|c|c}
\hline\hline
~~\(n\)~~ & ~~\(m\)~~ & ~~\(\E^{\min}_{\mathrm{outer},\parallel}\)~~ 
& ~~\(\E^{\min}_{\mathrm{hybrid},\parallel}\)~~ & ~~\(P(\%)\)~~ \\
\hline
6  & 4  & 2.787207  & 2.769031  & 0.6521 \\
9  & 6  & 9.205030  & 9.199600  & 0.0590 \\
12 & 8  & 27.230060 & 27.227683 & 0.0087 \\
15 & 10 & 78.251802 & 78.250970 & 0.0011 \\
\hline\hline
\end{tabular}\label{tab:hybrid_outer_l3}
\end{table}

%textcolor{red}{To more clearly illustrate the degree of improvement $P=(\mathrm{E}_{\text{outer},\parallel}^{\min}-\mathrm{E}_{\text{hybrid},\parallel}^{\min})/\mathrm{E}_{\text{outer},\parallel}^{\min}$ of the hybrid parallel search algorithm over the outer parallel search algorithm, we list normalized $\mathrm{E}_{\text{outer}}^{\min}$, $\mathrm{E}_{\text{hybrid}}^{\min}$, and the corresponding percentage reduction $P$ for results presented in Fig. \ref{fig:parallel}. It can be seen that the improvement diminishes as $l$ increases, and the hybrid parallel scheme offers a practical advantage only for small-scale parallelism.}

For the special case \(k_2=0\), the hybrid scheme reduces to the Grover-based hybrid parallel scheme. From Eqs.~\eqref{eq:E_outer} and \eqref{eq:E_hybrid_l_2}, the additional success probability in the denominator relative to the outer scheme is given by \((\Pr_{x=t_1}-\Pr_{x=t})^l=\mathcal O(1/N)\). Thus, for fixed \(l\), the hybrid scheme is strictly better than the outer scheme at finite \(N\), but the relative improvement is only \(\mathcal O(1/N)\). Equivalently, the absolute reduction in the expected iteration number is only \(O(N^{-1/2})\), so this advantage is only a subleading improvement. Allowing \(k_2\neq0\) does not change the subleading nature of this improvement: the local GRK operator can only redistribute probability inside the target block, and any positive advantage over the outer scheme still comes from the additional block-concatenation channel. Numerically, we verify the improvement of the hybrid parallel scheme over the outer parallel scheme using \(P=(\mathrm{E}_{\text{outer},\parallel}^{\min}-\mathrm{E}_{\text{hybrid},\parallel}^{\min})/\mathrm{E}_{\text{outer},\parallel}^{\min}\). In Tables~\ref{tab:hybrid_outer_l2} and \ref{tab:hybrid_outer_l3}, we see that \(P\) quickly falls below \(0.1\%\), showing that the improvement is subleading as \(N\) increases.

We have also compared the hybrid and inner parallel schemes. Analytically, we find that, for \(l=2\), the inner parallel scheme outperforms the hybrid parallel scheme; see Appendix~\ref{sec:comparing_hybrid_inner} for
the proof. We also expect the inner parallel scheme to outperform the hybrid parallel scheme for larger admissible values of \(l\), because the hybrid parallel scheme approaches the outer parallel scheme as \(l\) increases, while the outer parallel scheme is itself less efficient than the inner parallel
scheme.

%In Theorems \ref{thm:hybrid_outer} and \ref{thm:hybrid_inner}, the theoretical analysis assumes $k_2=0$, which corresponds to a special case of the GRK sequence. We perform a numerical search for the optimal sequences minimizing $\E^\text{min}_{\text{hybrid},\parallel}$ defined in Eq. \eqref{eq:E_hybrid_min}, scanning over $n=10$ to $n=50$ and all corresponding $m<n$. The results show that only for a few instances with $k_2\neq 0$ provide an additional advantage. However, because $l$ does not divide $n$ in those cases, they do not provide a practical advantage. 

\section{\label{sec:conclusion} Conclusions}

In this work, we investigated whether the GRK operator sequence, namely $G_n G_m^{k_2} G_n^{k_1}$, remains optimal under a fixed total number of oracle queries $k_{\mathrm{tot}}$. Through an exhaustive numerical enumeration of all $2^{k_{\mathrm{tot}}}$ admissible operator compositions, we provided strong evidence that the GRK structure maximizes the success probability for each fixed $k_{\mathrm{tot}}$. Under this numerically supported GRK ansatz, we derived asymptotically tight bounds on the maximal success probability and the minimal expected iteration number for the quantum partial search algorithm, as stated in Theorems~\ref{theorem:pr_GRK} and~\ref{theorem:k_opt_GRK_partial_search}. In particular, our analysis shows that the optimal choice of $k_2$ is independent of both the database size and $k_{\mathrm{tot}}$. Building on the established efficiency limits of partial search, we analyzed a GRK-based parallel search scheme and showed that it does not outperform existing parallel Grover strategies. We then introduced a hybrid parallel protocol, which provides a strict but subleading improvement over the outer parallel scheme, while the inner parallel scheme remains superior whenever it
is applicable.

We conclude by outlining three open questions concerning the quantum partial search algorithm. First, our analysis has focused on databases containing a single marked item. Although the quantum partial search algorithm admits extensions to the multi-target setting \cite{Choi2007QuantumPS,zhong2009quantum,Zhang2017QuantumPS}, its efficiency in that case depends sensitively on the distribution of target states across blocks. A systematic study of the hybrid parallel search scheme in multi-target scenarios therefore remains an important direction for future work.

Second, Theorem~\ref{theorem:k_opt_GRK_partial_search} indicates that when $m > \left\lfloor n/2 \right\rfloor$ (equivalently, when the number of blocks is less than $2^{n/2}$), the GRK algorithm becomes inefficient relative to classical search. This observation raises the broader question of whether alternative formulations of quantum partial search could retain a quantum advantage even in the regime $m > \left\lfloor n/2 \right\rfloor$.

Third, although we focus on the optimality of the GRK sequence with respect to oracle-query complexity, this does not imply optimality in circuit depth. Previous studies have emphasized that query complexity and depth complexity constitute distinct resource measures in quantum search algorithms \cite{grover2002trade,Zhang2020DepthO,brianski2021introducing}. From a practical implementation perspective, the design of quantum partial search algorithms that are optimal in circuit depth remains a meaningful and challenging objective.

\begin{acknowledgments}
This work was supported by the NSFC (Grants No.12305028, No.12275215, and No.12247103), and the Youth Innovation Team of Shaanxi Universities. KZ is supported by the China Postdoctoral Science Foundation under Grant Number 2025M773421, Shaanxi Province Postdoctoral Science Foundation under Grant Number 2025BSHYDZZ017, and Scientific Research Program Funded by Education Department of Shaanxi Provincial Government (Program No.24JP186). VK is funded by the U.S. Department of Energy, Office of Science, National Quantum Information Science Research Centers, Co-Design Center for Quantum Advantage (C2QA) under Contract No. DE-SC0012704.
\end{acknowledgments}

\appendix

\section{Numerical investigations on the optimal quantum partial search algorithm}

\label{app:numerical_quest}

In Sec. \ref{sec:num_quest}, we have examined all $2^{k_\text{tot}}$ possible operator sequences for the quantum partial search algorithm, as defined in Eq. \eqref{eq:S_sequence}. The optimal sequences and their corresponding maximal success probabilities are summarized in Table \ref{tab:n=8,pr} for $n=8$. With the exception of a few entries in the lower-right corner of the table, all optimal sequences found in our enumeration adopt the GRK form, namely $G_nG_m^{k_2}G_n^{k_1}$. These exceptional cases occur in the regime where the GRK sequence has already passed its natural optimal amplification point. In this situation, the remaining oracle queries can be used redundantly to further adjust the state and slightly enhance the success probability, in a manner analogous to the redundant-query strategy used in the full-search algorithm~\cite{wang2025near}. Therefore, these exceptional entries should not be interpreted as evidence that the GRK sequence is intrinsically suboptimal. Rather, they reflect finite-query over-rotation effects after the main GRK-type amplitude amplification has already been achieved.

The corresponding expected iteration numbers are presented in Table \ref{tab:n=8,k}. Notably, for $m=6$ and $m=7$, the minimal expected iteration numbers are achieved by sequences with $k_\text{tot}=2$, meaning that the quantum partial search algorithm operates inefficiently. This result underscores the algorithm's diminishing performance for large values of $m$.  We have also tested the optimal operator sequences for larger database sizes with a limited number of iterations in Tables \ref{tab:k20_15_20_25} and \ref{tab:k20_40}. We still observe that the GRK form is optimal in all cases considered, which suggests that the GRK form remains optimal in a broad range of practical cases.

\begin{table*}[t]
\caption{Optimal operators and their success probabilities for the quantum partial search algorithm with $n=8$, corresponding to Fig. \ref{fig:pr_max}. Algorithm sequences highlighted in blue deviate from the standard GRK form.}
\centering
\begin{tabular}{c|c|c|c|c|c|c|c}
\hline\hline
\multicolumn{2}{c|}{$m$} & 2 & 3 & 4 & 5 & 6 & 7 \\\hline 
\multirow{2}{*}{$k_\text{tot} = 2$} & Operator & $G_8G_2$ & $G_8G_3$ & $G_8G_4$ & $G_8G_5$ & $G_8G_6$ & $G_8G_7$ \\
& $\pr_{x=t_1}$ (\%) & 10.5747 & 13.4310 & 16.9145 & 22.7749 & 33.9446 & 56.0089 \\\hline 
\multirow{2}{*}{$k_\text{tot} = 3$} & Operator & $G_8G_2G_8$ & $G_8G_3G_8$ & $G_8G^2_4$ & $G_8G^2_5$ & $G_8G^2_6$ & $G_8G^2_7$ \\
& $\pr_{x=t_1}$ (\%) & 19.0236 & 23.0100 & 27.0968 & 33.6418 & 43.8606 & 62.8228 \\\hline 
\multirow{2}{*}{$k_\text{tot} = 4$} & Operator & $G_8G_2G^2_8$ & $G_8G_3G^2_8$ & $G_8G^2_4G_8$ & $G_8G^3_5$ & $G_8G^3_6$ & $G_8G^3_7$ \\
& $\pr_{x=t_1}$ (\%) & 29.4353 & 34.2693 & 38.7202 & 45.7859 & 55.6425 & 71.1470 \\\hline 
\multirow{2}{*}{$k_\text{tot} = 5$} & Operator & $G_8G_2G^3_8$ & $G_8G_3G^3_8$ & $G_8G_4^2G^2_8$ & $G_8G_5^3G_8$ & $G_8G_6^4$ & $G_8G_7^4$ \\
& $\pr_{x=t_1}$ (\%) & 41.1617 & 46.5081 & 51.0612 & 58.2046 & 67.7145 & 79.9491 \\\hline 
\multirow{2}{*}{$k_\text{tot} = 6$} & Operator & $G_8G_2G^4_8$ & $G_8G_3G^4_8$ & $G_8G_4^2G^3_8$ & $G_8G_5^3G^2_8$ & $G_8G_6^4G_8$ & $G_8G_7^5$ \\
& $\pr_{x=t_1}$ (\%) & 53.4727 & 58.9644 & 63.3514 & 70.1245 & 78.7004 & 88.1374 \\\hline 
\multirow{2}{*}{$k_\text{tot} = 7$} & Operator & $G_8G_2G^5_8$ & $G_8G_3G^5_8$ & $G_8G^2_4G^4_8$ & $G_8G^3_5G^3_8$ & $G_8G^4_6G^2_8$ & $G_8G_7^6$ \\
& $\pr_{x=t_1}$ (\%) & 65.6019 & 70.8626 & 74.8257 & 80.8038 & 87.9164 & 94.6963 \\\hline 
\multirow{2}{*}{$k_\text{tot} = 8$} & Operator & $G_8G_2G^6_8$ & $G_8G_3G^6_8$ & $G_8G^2_4G^5_8$ & $G_8G^3_5G^4_8$ & $G_8G^4_6G^3_8$ & $G_8G^7_7$ \\
& $\pr_{x=t_1}$ (\%) & 76.7941 & 81.4622 & 84.7698 & 89.5775 & 94.7887 & 98.8124 \\\hline 
\multirow{2}{*}{$k_\text{tot} = 9$} & Operator & $G_8G_2G^7_8$ & $G_8G_3G^7_8$ & $G_8G_4^2G^6_8$ & $G_8G_5^3G^5_8$ & $G_8G_6^4G^4_8$ & \textcolor{blue}{$G_8G^6_7G_8G_7$} \\
& $\pr_{x=t_1}$ (\%) & 86.3525 & 90.1031 & 92.5645 & 95.8994 & 98.8894 & \textcolor{blue}{99.9998} \\\hline 
\multirow{2}{*}{$k_\text{tot} = 10$} & Operator & $G_8G_2G^8_8$ & $G_8G_3G^8_8$ & $G_8G_4^2G^7_8$ & $G_8G_5^3G^6_8$ & \textcolor{blue}{$G_8G_6^4G^3_8G_6^2$} & \textcolor{blue}{$G_8G_7^3(G_8G_7)^2$} \\
& $\pr_{x=t_1}$ (\%) & 93.6822 & 96.2474 & 97.7247 & 99.3760 & \textcolor{blue}{99.9999} & \textcolor{blue}{99.9999} \\\hline 
\multirow{2}{*}{$k_\text{tot} = 11$} & Operator & $G_8G_2G^9_8$ & $G_8G_3G^9_8$ & $G_8G_4^2G^8_8$ & \textcolor{blue}{$G_8G_5(G^2_8G_5)^2G_8G^2_5$} & \textcolor{blue}{$G_8G_6G_8^4(G_6G_8)^2G_8$} & \textcolor{blue}{$G_7G_8G_7^3(G_8G_7)^3$} \\
& $\pr_{x=t_1}$ (\%) & 98.3268 & 99.5126 & 99.9290 & \textcolor{blue}{99.9999} & \textcolor{blue}{99.9999} & \textcolor{blue}{99.9999} \\
\hline\hline
\end{tabular}
\label{tab:n=8,pr}
\end{table*}

\begin{table*}[t]
\caption{Optimal operator sequences and corresponding expected iteration counts for the quantum partial search algorithm with $n=8$, corresponding to Fig. \ref{fig:E}. Values and operators highlighted in red indicate, for each $m$, the sequence achieving the minimal expected iteration number (i.e., the smallest value in each column). Sequences marked in blue deviate from the standard GRK form. These exceptional cases occur only when a shorter GRK‑type sequence already drives the success probability extremely close to unity; the additional queries never reduce the expected number of iterations and therefore represent genuinely sub‑optimal, redundant extensions.}
\centering
\begin{tabular}{c|c|c|c|c|c|c|c}
\hline\hline
\multicolumn{2}{c|}{$m$} & 2 & 3 & 4 & 5 & 6 & 7 \\\hline 
\multirow{2}{*}{$k_\text{tot} = 2$} & Operator & $G_8G_2$ & $G_8G_3$ & $G_8G_4$ & $G_8G_5$ & \textcolor{red}{$G_8G_6$} & \textcolor{red}{$G_8G_7$} \\
& $\E_{x=t_1}$ & 18.9130 & 14.8909 & 11.8242 & 8.7816 & \textcolor{red}{5.8919} & \textcolor{red}{3.5709} \\\hline 
\multirow{2}{*}{$k_\text{tot} = 3$} & Operator & $G_8G_2G_8$ & $G_8G_3G_8$ & $G_8G^2_4$ & $G_8G^2_5$ & $G_8G^2_6$ & $G_8G^2_7$ \\
& $\E_{x=t_1}$ & 15.7699 & 13.0378 & 11.0714 & 8.9175 & 6.8398 & 4.7753 \\\hline 
\multirow{2}{*}{$k_\text{tot} = 4$} & Operator & $G_8G_2G^2_8$ & $G_8G_3G^2_8$ & $G_8G^2_4G_8$ & $G_8G^3_5$ & $G_8G^3_6$ & $G_8G^3_7$ \\
& $\E_{x=t_1}$ & 13.5891 & 11.6722 & 10.3305 & 8.7363 & 7.1887 & 5.6222 \\\hline 
\multirow{2}{*}{$k_\text{tot} = 5$} & Operator & $G_8G_2G^3_8$ & $G_8G_3G^3_8$ & $G_8G_4^2G^2_8$ & $G_8G_5^3G_8$ & $G_8G_6^4$ & $G_8G_7^4$ \\
& $\E_{x=t_1}$ & 12.1472 & 10.7508 & 9.7922 & 8.5904 & 7.3839 & 6.2540 \\\hline 
\multirow{2}{*}{$k_\text{tot} = 6$} & Operator & $G_8G_2G^4_8$ & $G_8G_3G^4_8$ & $G_8G_4^2G^3_8$ & \textcolor{red}{$G_8G_5^3G^2_8$} & $G_8G_6^4G_8$ & $G_8G_7^5$ \\
& $\E_{x=t_1}$ & 11.2207 & 10.1756 & 9.4710 & \textcolor{red}{8.5562} & 7.6238 & 6.8076 \\\hline 
\multirow{2}{*}{$k_\text{tot} = 7$} & Operator & $G_8G_2G^5_8$ & $G_8G_3G^5_8$ & $\textcolor{red}{G_8G^2_4G^4_8}$ & $G_8G^3_5G^3_8$ & $G_8G^4_6G^2_8$ & $G_8G_7^6$ \\
& $\E_{x=t_1}$ & 10.6704 & 9.8783 & \textcolor{red}{9.3551} & 8.6630 & 7.9621 & 7.3921 \\\hline 
\multirow{2}{*}{$k_\text{tot} = 8$} & Operator & \textcolor{red}{$G_8G_2G^6_8$} & \textcolor{red}{$G_8G_3G^6_8$} & $G_8G^2_4G^5_8$ & $G_8G^3_5G^4_8$ & $G_8G^4_6G^3_8$ & $G_8G^7_7$ \\
& $\E_{x=t_1}$ & \textcolor{red}{10.4175} & \textcolor{red}{9.8205} & 9.4373 & 8.9308 & 8.4398 & 8.0962 \\\hline 
\multirow{2}{*}{$k_\text{tot} = 9$} & Operator & $G_8G_2G^7_8$ & $G_8G_3G^7_8$ & $G_8G_4^2G^6_8$ & $G_8G_5^3G^5_8$ & $G_8G_6^4G^4_8$ & \textcolor{blue}{$G_8G_7^6G_8G_7$} \\
& $\E_{x=t_1}$ & 10.4224 & 9.9886 & 9.7229 & 9.3848 & 9.1011 & \textcolor{blue}{9.0000} \\\hline 
\multirow{2}{*}{$k_\text{tot} = 10$} & Operator & $G_8G_2G^8_8$ & $G_8G_3G^8_8$ & $G_8G_4^2G^7_8$ & $G_8G_5^3G^6_8$ & \textcolor{blue}{$G_8G_6^4G^3_8G_6^2$} & \textcolor{blue}{$G_8G_7^3(G_8G_7)^2$} \\
& $\E_{x=t_1}$ & 10.6744 & 10.3899 & 10.2328 & 10.0628 & \textcolor{blue}{10.0000} & \textcolor{blue}{10.0000} \\\hline 
\multirow{2}{*}{$k_\text{tot} = 11$} & Operator & $G_8G_2G^9_8$ & $G_8G_3G^9_8$ & $G_8G_4^2G^8_8$ & \textcolor{blue}{$G_8G_5(G^2_8G_5)^2G_8G^2_5$} & \textcolor{blue}{$G_8G_6G_8^4(G_6G_8)^2G_8$} & \textcolor{blue}{$G_7G_8G_7^3(G_8G_7)^3$} \\
& $\E_{x=t_1}$ & 11.1872 & 11.0539 & 11.0078 & \textcolor{blue}{11.0000} & \textcolor{blue}{11.0000} & \textcolor{blue}{11.0000} \\
\hline\hline
\end{tabular}
\label{tab:n=8,k}
\end{table*}

\begin{table*}[t]
\caption{Optimal operator sequences and corresponding expected iteration counts for the quantum partial search algorithm with $k_\text{tot}=20$. The database size is chosen as $n=15,20,25$. The size of partial search is chosen as $2\leq m\leq n-1$.}
\centering
\begin{tabular}{c|c|c|c|c|c|c|c|c}
\hline\hline
\multicolumn{9}{c}{$n=15$}\\\hline
$m$ & 2 & 3 & 4 & 5 & 6 & 7 & 8 & 9 \\
Operator & $G_{15}G_2G_{15}^{18}$ & $G_{15}G_3G_{15}^{18}$ & $G_{15}G_4^2G_{15}^{17}$ & $G_{15}G_5^2G_{15}^{17}$ & $G_{15}G_6^4G_{15}^{15}$ & $G_{15}G_7^5G_{15}^{14}$ & $G_{15}G_8^8G_{15}^{11}$ & $G_{15}G_9^{11}G_{15}^8$ \\
$\E_{x=t_1}$ & 395.9080 & 378.2546 & 367.0226 & 348.2238 & 323.7248 & 294.4253 & 259.0569 & 219.5925 \\\hline
$m$ & 10 & 11 & 12 & 13 & 14 & & & \\
Operator & $G_{15}G_{10}^{17}G_{15}^2$ & $G_{15}G_{11}^{19}$ & $G_{15}G_{12}^{19}$ & $G_{15}G_{13}^{19}$ & $G_{15}G_{14}^{19}$ & & & \\
$\E_{x=t_1}$ & 177.7726 & 136.7080 & 97.2027 & 62.4357 & 36.5457 & & & \\
\hline\hline
\multicolumn{9}{c}{$n=20$}\\\hline
$m$ & 2 & 3 & 4 & 5 & 6 & 7 & 8 & 9 \\
Operator & $G_{20}G_2G_{20}^{18}$ & $G_{20}G_3G_{20}^{18}$ & $G_{20}G_4^2G_{20}^{17}$ & $G_{20}G_5^2G_{20}^{17}$ & $G_{20}G_6^4G_{20}^{15}$ & $G_{20}G_7^5G_{20}^{14}$ & $G_{20}G_8^8G_{20}^{11}$ & $G_{20}G_9^{11}G_{20}^8$ \\
$\E_{x=t_1}$ & $1.2460\times 10^{4}$ & $1.1895\times 10^{4}$ & $1.1536\times 10^{4}$ & $1.0934\times 10^{4}$ & $1.0150\times 10^{4}$ & $9.2112\times 10^{3}$ & $8.0800\times 10^{3}$ & $6.8153\times 10^{3}$ \\\hline
$m$ & 10 & 11 & 12 & 13 & 14 & 15 & 16 & 17 \\
Operator & $G_{20}G_{10}^{16}G_{20}^3$ & $G_{20}G_{11}^{19}$ & $G_{20}G_{12}^{19}$ & $G_{20}G_{13}^{19}$ & $G_{20}G_{14}^{19}$ & $G_{20}G_{15}^{19}$ & $G_{20}G_{16}^{19}$ & $G_{20}G_{17}^{19}$ \\
$\E_{x=t_1}$ & $5.4764\times 10^{3}$ & $4.1631\times 10^{3}$ & $2.9212\times 10^{3}$ & $1.8537\times 10^{3}$ & $1.0751\times 10^{3}$ & 584.9292 & 306.0301 & 156.6593 \\\hline
$m$ & 18 & 19 & & & & & & \\
Operator & $G_{20}G_{18}^{19}$ & $G_{20}G_{19}^{19}$ & & & & & & \\
$\E_{x=t_1}$ & 79.2751 & 39.8784 & & & & & & \\
\hline\hline
\multicolumn{9}{c}{$n=25$}\\\hline
$m$ & 2 & 3 & 4 & 5 & 6 & 7 & 8 & 9 \\
Operator & $G_{25}G_2G_{25}^{18}$ & $G_{25}G_3G_{25}^{18}$ & $G_{25}G_4^2G_{25}^{17}$ & $G_{25}G_5^2G_{25}^{17}$ & $G_{25}G_6^4G_{25}^{15}$ & $G_{25}G_7^5G_{25}^{14}$ & $G_{25}G_8^8G_{25}^{11}$ & $G_{25}G_9^{11}G_{25}^8$ \\
$\E_{x=t_1}$ & $3.9852\times 10^{5}$ & $3.8044\times 10^{5}$ & $3.6894\times 10^{5}$ & $3.4967\times 10^{5}$ & $3.2460\times 10^{5}$ & $2.9455\times 10^{5}$ & $2.5835\times 10^{5}$ & $2.1788\times 10^{5}$ \\\hline
$m$ & 10 & 11 & 12 & 13 & 14 & 15 & 16 & 17 \\
Operator & $G_{25}G_{10}^{16}G_{25}^3$ & $G_{25}G_{11}^{19}$ & $G_{25}G_{12}^{19}$ & $G_{25}G_{13}^{19}$ & $G_{25}G_{14}^{19}$ & $G_{25}G_{15}^{19}$ & $G_{25}G_{16}^{19}$ & $G_{25}G_{17}^{19}$ \\
$\E_{x=t_1}$ & $1.7504\times 10^{5}$ & $1.3301\times 10^{5}$ & $9.3298\times 10^{4}$ & $5.9181\times 10^{4}$ & $3.4315\times 10^{4}$ & $1.8666\times 10^{4}$ & $9.7651\times 10^{3}$ & $4.9986\times 10^{3}$ \\\hline
$m$ & 18 & 19 & 20 & 21 & 22 & 23 & 24 & \\
Operator & $G_{25}G_{18}^{19}$ & $G_{25}G_{19}^{19}$ & $G_{25}G_{20}^{19}$ & $G_{25}G_{21}^{19}$ & $G_{25}G_{22}^{19}$ & $G_{25}G_{23}^{19}$ & $G_{25}G_{24}^{19}$ & \\
$\E_{x=t_1}$ & $2.5294\times 10^{3}$ & $1.2724\times 10^{3}$ & 638.1138 & 319.5429 & 159.8933 & 79.9771 & 39.9962 & \\
\hline\hline
\end{tabular}
\label{tab:k20_15_20_25}
\end{table*}

\begin{table*}[t]
\caption{Optimal operator sequences and corresponding expected iteration counts for the quantum partial search algorithm with $k_\text{tot}=20$ (II: $n=40$)}
\centering
\begin{tabular}{c|c|c|c|c|c|c}
\hline\hline
\multicolumn{7}{c}{$n=40$}\\\hline
$m$ & 2 & 3 & 4 & 5 & 6 & 7 \\
Operator & $G_{40}^{20}$ & $G_{40}G_3G_{40}^{18}$ & $G_{40}G_4^2G_{40}^{17}$ & $G_{40}G_5^2G_{40}^{17}$ & $G_{40}G_6^4G_{40}^{15}$ & $G_{40}G_7^5G_{40}^{14}$ \\
$\E_{x=t_1}$ & $1.3058\times 10^{10}$ & $1.2466\times 10^{10}$ & $1.2089\times 10^{10}$ & $1.1458\times 10^{10}$ & $1.0636\times 10^{10}$ & $9.6517\times 10^{9}$ \\\hline
$m$ & 8 & 9 & 10 & 11 & 12 & 13 \\
Operator & $G_{40}G_8^8G_{40}^{11}$ & $G_{40}G_9^{11}G_{40}^8$ & $G_{40}G_{10}^{16}G_{40}^3$ & $G_{40}G_{11}^{19}$ & $G_{40}G_{12}^{19}$ & $G_{40}G_{13}^{19}$ \\
$\E_{x=t_1}$ & $8.4655\times 10^{9}$ & $7.1394\times 10^{9}$ & $5.7354\times 10^{9}$ & $4.3584\times 10^{9}$ & $3.0570\times 10^{9}$ & $1.9391\times 10^{9}$ \\\hline
$m$ & 14 & 15 & 16 & 17 & 18 & 19 \\
Operator & $G_{40}G_{14}^{19}$ & $G_{40}G_{15}^{19}$ & $G_{40}G_{16}^{19}$ & $G_{40}G_{17}^{19}$ & $G_{40}G_{18}^{19}$ & $G_{40}G_{19}^{19}$ \\
$\E_{x=t_1}$ & $1.1244\times 10^{9}$ & $6.1160\times 10^{8}$ & $3.1995\times 10^{8}$ & $1.6378\times 10^{8}$ & $8.2875\times 10^{7}$ & $4.1689\times 10^{7}$ \\\hline
$m$ & 20 & 21 & 22 & 23 & 24 & 25 \\
Operator & $G_{40}G_{20}^{19}$ & $G_{40}G_{21}^{19}$ & $G_{40}G_{22}^{19}$ & $G_{40}G_{23}^{19}$ & $G_{40}G_{24}^{19}$ & $G_{40}G_{25}^{19}$ \\
$\E_{x=t_1}$ & $2.0908\times 10^{7}$ & $1.0470\times 10^{7}$ & $5.2389\times 10^{6}$ & $2.6204\times 10^{6}$ & $1.3105\times 10^{6}$ & $6.5530\times 10^{5}$ \\\hline
$m$ & 26 & 27 & 28 & 29 & 30 & 31 \\
Operator & $G_{40}G_{26}^{19}$ & $G_{40}G_{27}^{19}$ & $G_{40}G_{28}^{19}$ & $G_{40}G_{29}^{19}$ & $G_{40}G_{30}^{19}$ & $G_{40}G_{31}^{19}$ \\
$\E_{x=t_1}$ & $3.2766\times 10^{5}$ & $1.6384\times 10^{5}$ & $8.1919\times 10^{4}$ & $4.0960\times 10^{4}$ & $2.0480\times 10^{4}$ & $1.0240\times 10^{4}$ \\\hline
$m$ & 32 & 33 & 34 & 35 & 36 & 37 \\
Operator & $G_{40}G_{32}^{19}$ & $G_{40}G_{33}^{19}$ & $G_{40}G_{34}^{19}$ & $G_{40}G_{35}^{19}$ & $G_{40}G_{36}^{19}$ & $G_{40}G_{37}^{19}$ \\
$\E_{x=t_1}$ & $5.1200\times 10^{3}$ & $2.5600\times 10^{3}$ & $1.2800\times 10^{3}$ & 639.9999 & 320.0000 & 160.0000 \\\hline
$m$ & 38 & 39 & & & & \\
Operator & $G_{40}G_{38}^{19}$ & $G_{40}G_{39}^{19}$ & & & & \\
$\E_{x=t_1}$ & 80.0000 & 40.0000 & & & & \\
\hline\hline
\end{tabular}
\label{tab:k20_40}
\end{table*}

\section{Comparing the hybrid and outer parallel schemes}

\label{sec:comparing_hybrid_outer}

\begin{theorem}
\label{thm:parallel_hybrid_outer_large_l}
When the parallelism reaches its maximum, i.e., when $l=n$, a comparison between the hybrid and outer parallel schemes yields
\begin{equation}
\E_{\mathrm{hybrid},\parallel}^\mathrm{min}(l=n) \lessapprox \E_{\mathrm{outer},\parallel}^\mathrm{min}(l=n).
\end{equation}
\end{theorem}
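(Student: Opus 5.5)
The plan is to sandwich $\E_{\mathrm{hybrid},\parallel}^{\min}(l=n)$ between $\E_{\mathrm{outer},\parallel}^{\min}(l=n)$ and that quantity plus a correction that vanishes as $n\to\infty$.

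\emph{Lower bound.} For $l=n$ we have $m=n-1$, $K=2$ and $b=N/2$. Each QPU therefore resolves a single bit. Here the $\lessapprox$ means the two minima coincide to leading order in $\sqrt{N/l}$, with the hybrid strictly smaller. The strict inequality is already given by Theorem~\ref{thm:hybrid_outer}.

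\emph{Upper bound.} It remains to show that the extra block-concatenation channel contributes only negligibly. Write $p_t=\pr_{x=t}(k_1,k_2)$ and $p_b=\pr_{x=t_1}(k_1,k_2)$. The hybrid denominator is
\begin{equation*}
1-(1-p_t)^n+(p_b-p_t)^n .
\end{equation*}
At the minimum of the outer scheme, Eq.~\eqref{eq:E_outer_parallel} gives $x=(2k+1)\sqrt{n/N}=\mathcal O(1)$. Hence the per-QPU number of queries is $k=\mathcal O(\sqrt{N/n})$, which is much smaller than $\sqrt N$.

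\emph{Bounding the extra term.} I will bound $(p_b-p_t)^n$ for any GRK sequence with $k_{\mathrm{tot}}=\mathcal O(\sqrt{N/n})$. Since $K=2$, $p_b\le 1/2+\mathcal O(k\theta_1)$ from the three-dimensional rotation picture. Indeed, $\langle\bar b|\psi\rangle$ starts at $\cos\gamma=1/\sqrt2$. Each $G_m$ leaves this component fixed, and each $G_n$ changes it by at most $\mathcal O(\sin\gamma\,\theta_1)$ relative to a rotation of small angle. I will make this precise using the matrices in Eq.~\eqref{eq:matrix_GnGm}: the global reflection about $|s_n\rangle$ combined with the oracle moves $|\bar b\rangle$-amplitude only through the $2\sin\gamma\cos\gamma$ entries. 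These act on amplitudes of $|t\rangle$ of size $\mathcal O(k\theta_1)$, plus a global contribution of the form $\cos(2k_1\theta_1)$. Consequently $p_b-p_t\le 1/2+\mathcal O(1/\sqrt n)$. I expect this estimate to be the main obstacle. The reason is that the local operator $G_m$ with $b=N/2$ can rotate substantially inside the target block, so I must verify that it cannot raise the target-block weight beyond $1/2$ by more than $\mathcal O(k\theta_1)=\mathcal O(1/\sqrt n)$. My first attempt will use the fact that $\langle \bar b|\cdot\rangle$ is changed only by $G_n$, together with a telescoping bound over the $k_1+1$ global steps.

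\emph{Conclusion.} The bound gives $(p_b-p_t)^n\le (1/2+c/\sqrt n)^n\approx 2^{-n}e^{2c\sqrt n}$, which is exponentially small compared with the outer success probability $1-e^{-x^2}=\Theta(1)$. Moreover, $p_t\le\sin^2((2k_{\mathrm{tot}}-1)\theta_1)$ up to corrections (the full-target amplitude cannot exceed Grover's), so $1-(1-p_t)^n$ is at most the outer denominator at the same $k$. Therefore
\begin{equation*}
\E_{\mathrm{hybrid},\parallel}^{\min}(l=n)\ \ge\ \E_{\mathrm{outer},\parallel}^{\min}(l=n)\bigl(1-\mathcal O(2^{-n}e^{c'\sqrt n})\bigr).
\end{equation*}
Combined with Theorem~\ref{thm:hybrid_outer}, this yields $\E_{\mathrm{hybrid},\parallel}^{\min}(l=n)\lessapprox\E_{\mathrm{outer},\parallel}^{\min}(l=n)$ with vanishing relative gap.
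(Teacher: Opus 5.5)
Your route is genuinely different from the paper's. The paper evaluates only the $k_2=0$ member of the GRK family. There $\pr_{x=t_1}-\pr_{x=t}\approx\frac12\cos^2\phi$, the block-concatenation term is at most $2^{-n}$, and minimizing $2^{n/2-1}\phi/(1-e^{-n\phi^2})$ reproduces $0.7835\sqrt{N/n}$. Because $k_2=0$ is only one admissible choice, this bounds $\E^{\min}_{\mathrm{hybrid},\parallel}$ from above and cannot exclude a leading-order gain from $k_2\neq0$. Your sandwich has two parts: the strict inequality of Theorem~\ref{thm:hybrid_outer}, plus a lower bound on $\E_{\mathrm{hybrid},\parallel}$ valid for every GRK sequence. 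It is stronger, needs no explicit optimization, and is what ``the hybrid approaches the outer scheme'' actually requires. Note that your headings are swapped: the paragraph titled ``Upper bound'' is the one that lower-bounds $\E^{\min}_{\mathrm{hybrid},\parallel}$.

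The real gap is the estimate on $p_b$, and the mechanism you propose for it fails. For $l=n$ one has $\gamma=\pi/4$ exactly, so $2\cos^2\gamma-1=0$ and the third row of $G_n$ in Eq.~\eqref{eq:matrix_GnGm} is $(-\sin\theta_2,\cos\theta_2,0)$. A global step therefore does not perturb the $\bar b$ amplitude by $\mathcal O(\theta_1)$. It overwrites it with $\cos\theta_2\,\psi_{b\bar t}-\sin\theta_2\,\psi_t$, and the dominant entry multiplies $\psi_{b\bar t}=\mathcal O(1)$, not $\psi_t$.

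During the first $k_1$ steps this is harmless only because the state stays in $\mathrm{span}\{|t\rangle,|s_n\rangle\}$, where $\psi_{b\bar t}\approx\psi_{\bar b}$. After $G_m^{k_2}$ the state has left that plane. The final $G_n$ then shifts the $\bar b$ amplitude by $\frac{1}{\sqrt2}\cos u\,(\cos w-1)-\sin u\sin w$, where $u=2k_1\theta_1$ and $w=2k_2\theta_2$. This is $\Theta(1/n)$ when $k_2=\Theta(\sqrt{N/n})$, exponentially larger than $\theta_1$, so a telescoping sum of $\mathcal O(\theta_1)$ per-step changes cannot capture it.

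The estimate itself is true, and the closed form in the proof of Theorem~\ref{theorem:pr_GRK} gives it directly. At $\gamma=\pi/4$,
\[
\langle\bar b|G_nG_m^{k_2}G_n^{k_1}|s_n\rangle=\tfrac{1}{\sqrt2}\cos u\cos w-\sin u\sin w+\mathcal O(2^{-n/2}).
\]
For small $u,w\ge0$, and using $\theta_2\simeq\sqrt2\,\theta_1$, this yields $p_b\le\frac12+\frac12(u+\sqrt2\,w)^2+\mathcal O(2^{-n/2})\le\frac12+8k_{\mathrm{tot}}^2\theta_1^2+\mathcal O(2^{-n/2})$. Hence $p_b=\frac12+\mathcal O(1/n)$ and $(p_b-p_t)^n\le2^{-n}e^{\mathcal O(1)}$, which is sharper than your $2^{-n}e^{c\sqrt n}$.

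Three smaller repairs are also needed.

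First, justify restricting to $k_{\mathrm{tot}}\le\E^{\min}_{\mathrm{outer},\parallel}$ by the bound $\E_{\mathrm{hybrid},\parallel}(k_{\mathrm{tot}})\ge k_{\mathrm{tot}}$. Where the outer minimizer sits is not the reason.

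Second, as written, $p_t\le\sin^2((2k_{\mathrm{tot}}-1)\theta_1)$ is violated already at $k_2=0$, where $p_t=\sin^2((2k_{\mathrm{tot}}+1)\theta_1)$. What you need, and what holds, is $p_t\le\sin^2((2k_{\mathrm{tot}}+1)\theta_1)$. This follows from the exact query-optimality bound for Grover search. That bound applies because block-preserving permutations commute with $D_n$ and $D_m$ and fix $|s_n\rangle$, so the GRK success probability is the same for every target. It gives exactly the outer denominator at $k=k_{\mathrm{tot}}$.

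Third, the extra term is exponentially small relative to the outer success probability only when $k_{\mathrm{tot}}=\Theta(\sqrt{N/n})$. At $k_{\mathrm{tot}}=1$ the outer denominator is about $9n/N$, against an extra term of about $1/N$. Uniformly in $k_{\mathrm{tot}}$ you get a relative correction $\mathcal O(1/n)$, which still gives the vanishing gap. Alternatively, discard $k_{\mathrm{tot}}=o(\sqrt{N/n})$, where $\E$ is far above $\sqrt{N/n}$.
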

\begin{proof}

The condition \(l=n\) gives \(m=n-1\) and hence \(\sin\gamma=1/\sqrt{2}\).
We consider the special case \(k_2=0\), which corresponds to the standard
Grover limit of the GRK sequence. This is sufficient for obtaining an upper
bound on the minimal expected iteration number of the hybrid scheme. In this
case, for \(n\gg 1\), we have
\begin{subequations}
\begin{gather}
    \pr_{x=t_1}(k)
    \approx
    \sin^2\phi+\frac{1}{2}\cos^2\phi, \\
    \pr_{x=t}(k)
    \approx
    \sin^2\phi,
\end{gather}
\end{subequations}
where \(\phi=2k\theta_1\). Therefore,
\begin{equation}
    \pr_{x=t_1}(k)-\pr_{x=t}(k)
    \approx
    \frac{1}{2}\cos^2\phi .
\end{equation}
The success probability of the hybrid parallel scheme, given by Eq. \eqref{eq:E_hybrid_l_2}, gives
\begin{multline}
    P_{\mathrm{hybrid},\parallel}(k)
    \\=
    1-\left(1-\pr_{x=t}(k)\right)^l
    +
    \left(\pr_{x=t_1}(k)-\pr_{x=t}(k)\right)^l ,
\end{multline}
and setting \(l=n\), we obtain
\begin{align}
    P_{\mathrm{hybrid},\parallel}(\phi)
    &\approx
    1-\left(1-\sin^2\phi\right)^n
    +
    \left(\frac{1}{2}\cos^2\phi\right)^n  \nonumber \\
    &=
    1-\cos^{2n}\phi+2^{-n}\cos^{2n}\phi  \nonumber \\
    &=
    1-\left(1-2^{-n}\right)\cos^{2n}\phi .
\end{align}
Since \(k\approx \phi/(2\theta_1)\approx 2^{n/2-1}\phi\), the expected
iteration number with \(k_2=0\) is
\begin{equation}
    \E_{\mathrm{hybrid},\parallel}(\phi,l=n)
    \approx
    \frac{2^{n/2-1}\phi}
    {1-\cos^{2n}\phi+2^{-n}\cos^{2n}\phi}.
\end{equation}
For \(n\gg 1\), we have \(\phi=\mathcal{O}(1/\sqrt{n})\).
Using
\begin{equation}
    \cos^{2n}\phi
    =
    \exp\!\left(2n\ln\cos\phi\right)
    =
    \exp\!\left(-n\phi^2+\mathcal{O}(n\phi^4)\right),
\end{equation}
and noting that \(n\phi^4=\mathcal{O}(1/n)\) in this regime, we have
\begin{equation}
    \cos^{2n}\phi
    \approx
    e^{-n\phi^2}.
\end{equation}
The additional hybrid correction term satisfies
\begin{equation}
    2^{-n}\cos^{2n}\phi \leq 2^{-n},
\end{equation}
and is therefore exponentially suppressed as \(n\to\infty\). Hence the
denominator reduces to
\begin{equation}
    1-\cos^{2n}\phi+2^{-n}\cos^{2n}\phi
    \approx
    1-e^{-n\phi^2}.
\end{equation}
Thus we have
\begin{equation}
    \E_{\mathrm{hybrid},\parallel}(\phi,l=n)
    \approx
    \frac{2^{n/2-1}\phi}{1-e^{-n\phi^2}}.
\end{equation}
The minimal point is given by \(\phi^*\) satisfying 
\((1+2n\phi_\text{min}^2)\exp\left(-n\phi_\text{min}^2\right)=1\), which has the numerical solution \(n\phi_\text{min}^2\approx 1.25643\). Thus \(k_{\min}\approx0.56045\sqrt{N/n}\)
and the corresponding minimum expected iteration number is
\begin{equation}
\E_{\text{hybrid},\parallel}^{\min}\approx0.7835\sqrt{\frac{N}{n}},
\end{equation}
same value as \(\E_{\text{outer},\parallel}^{\min}\) given by Eq. \eqref{eq:E_outer_parallel}.
\end{proof}

\section{Comparing the hybrid and inner parallel schemes}

\label{sec:comparing_hybrid_inner}

\begin{theorem}
\label{thm:hybrid_inner}
Hybrid parallel search algorithm does not outperform inner parallel search algorithm when $l=2$, namely $\E^\mathrm{min}_{\mathrm{hybrid},\parallel}(l=2)>\E^\mathrm{min}_{\mathrm{inner},\parallel}(l=2)$. 
\end{theorem}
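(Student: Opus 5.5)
Our plan is to lower-bound the hybrid expected iteration number at $l=2$ by a quantity that is manifestly larger than the inner minimum $\E^{\min}_{\mathrm{inner},\parallel}(l=2)\approx 0.69\sqrt{N/2}\approx 0.4879\sqrt N$ from Eq.~\eqref{eq:E_min_inner}. For $l=2$ we have $m=n/2$, so $b=K=\sqrt N$ and $\gamma\approx N^{-1/4}$. We do not restrict to $k_2=0$, because we need a lower bound that holds over the whole GRK family. The success probability in the denominator of $\E_{\mathrm{hybrid},\parallel}$ is
\[
P_{\mathrm{hybrid}}=1-(1-p_t)^2+(p_{t_1}-p_t)^2\le 2p_t+(p_{t_1}-p_t)^2 ,
\]
where $p_t=\pr_{x=t}(k_1,k_2)$ and $p_{t_1}=\pr_{x=t_1}(k_1,k_2)$.

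\textbf{Step 1: bound the two probabilities.} With $k_{\mathrm{tot}}=1+\alpha\sqrt N$, Theorem~\ref{theorem:pr_GRK} gives $p_{t_1}\le \sin^2(2\alpha)+\mathcal O(N^{-1/4})$. For the full-target probability we use the optimality of Grover's algorithm, namely the exact bound of \cite{Dohotaru2008ExactQL}: any algorithm making $k$ queries satisfies $p_t\le \sin^2((2k+1)\theta_1)$. The GRK sequence is such an algorithm, so $p_t\le\sin^2(2\alpha)+\mathcal O(N^{-1/2})$. The intra-block term $p_{t_1}-p_t$ is, by the expansion in the proof of Theorem~\ref{theorem:pr_GRK}, of order $\gamma$ plus corrections. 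Its square is therefore $\mathcal O(\gamma^2)=\mathcal O(N^{-1/2})$, and more crudely it is at most $1/4$ if we only use $(p_{t_1}-p_t)^2\le(1-p_t)^2$.

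\textbf{Step 2: compare with the inner scheme.} Up to vanishing corrections this gives $P_{\mathrm{hybrid}}\le 1-\cos^4(2\alpha)$. Hence
\[
\E_{\mathrm{hybrid},\parallel}\gtrsim \frac{\alpha\sqrt N}{1-\cos^4(2\alpha)} .
\]
This is exactly the outer-scheme expression for $l=2$ (Eq.~\eqref{eq:E_outer} with $\phi=2\alpha$). We minimize over $\alpha$ numerically, solving $\frac{d}{d\alpha}\left[\alpha/(1-\cos^4 2\alpha)\right]=0$. The result is a constant $c_{\mathrm{out}}\sqrt N$ with $c_{\mathrm{out}}\approx 0.54$, close to the value $0.7835/\sqrt2\approx0.554$ from Eq.~\eqref{eq:E_outer_parallel}. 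The inner constant is $0.69/\sqrt2\approx0.488$. Since $c_{\mathrm{out}}>0.488$ by a margin of order one, and all corrections are $\mathcal O(N^{-1/4})$ relative, the strict inequality $\E^{\min}_{\mathrm{hybrid},\parallel}(l=2)>\E^{\min}_{\mathrm{inner},\parallel}(l=2)$ follows for large $n$.

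\textbf{Step 3: small $k_{\mathrm{tot}}$.} We must also exclude the boundary branch $k_{\mathrm{tot}}=\mathcal O(1)$ found in Lemma~\ref{lemma:extrem_point}. There $P_{\mathrm{hybrid}}=\mathcal O(1/\sqrt N)$ at best, because the concatenation channel needs both QPUs to land in their target blocks, each with probability about $1/K$, giving only about $1/K^2$. The expectation is then of order $\sqrt N$ with a large constant, or larger, which is not competitive.

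\textbf{Main obstacle.} The hard part is making Step 1 rigorous uniformly over all $(k_1,k_2)$ rather than only at the GRK optimum. We need the remainder terms to stay uniformly small over the range $\alpha\in(0,\pi/4]$ for the bound to hold. The cleanest route is to avoid the asymptotic formula for $p_{t_1}$ altogether. We would use only the query-optimality bound on $p_t$ together with the crude estimate $(p_{t_1}-p_t)^2\le\gamma$-order terms. If the crude estimate proves insufficient near small $\alpha$, we fall back on the explicit three-dimensional matrices in Eq.~\eqref{eq:matrix_GnGm} and bound the $|b\bar t\rangle$ amplitude directly.
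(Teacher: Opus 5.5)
There is a genuine gap in Step~1. It rests on the claim that the intra-block probability $p_{t_1}-p_t$ of a GRK sequence is $\mathcal O(\gamma)$, and that claim is false.

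After the $k_1$ global steps, the $|t\rangle$ amplitude is $\approx\sin(2\alpha)$ and the $|b\bar t\rangle$ amplitude is only $\mathcal O(\gamma)$. The $k_2$ local steps then rotate the $\{|t\rangle,|b\bar t\rangle\}$ plane by $2\beta=2k_2\theta_2$. The final $G_n$ changes these two amplitudes only at order $\gamma$. Hence, to leading order,
\[
p_t\approx\sin^2(2\alpha)\cos^2(2\beta),\qquad p_{t_1}-p_t\approx\sin^2(2\alpha)\sin^2(2\beta),
\]
so the intra-block term is $\tfrac34\sin^2(2\alpha)$ at the GRK value $\beta=\pi/6$.

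The $\epsilon\gamma$ term in Theorem~\ref{theorem:pr_GRK} and its corollary measures how much the GRK \emph{block} probability exceeds Grover's full-target probability. It says nothing about how the GRK output splits between $|t\rangle$ and $|b\bar t\rangle$. Separate upper bounds on $p_t$ and $p_{t_1}$ never control their difference. So your bound $P_{\mathrm{hybrid}}\le 1-\cos^4(2\alpha)+o(1)$ is unjustified precisely in the $k_2\neq0$ regime that makes the statement nontrivial.

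Your fallbacks do not close this. If $(p_{t_1}-p_t)^2$ is bounded only by a constant such as $1/4$, the lower bound on $\E_{\mathrm{hybrid},\parallel}$ drops below $0.4879\sqrt N$; for example, it is about $0.39\sqrt N$ at $\alpha=0.4$. Computing the $|b\bar t\rangle$ amplitude from Eq.~\eqref{eq:matrix_GnGm} would reveal the $\mathcal O(1)$ value above, not a small one.

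What is missing is the trade-off: the local rotation converts full-target probability into concatenation probability at a loss. The quickest repair is the inequality
\[
1-(1-p_t)^2+(p_{t_1}-p_t)^2\le 1-(1-p_{t_1})^2,
\]
which is just $(a+c)^2\ge a^2+c^2$ with $a=1-p_{t_1}$ and $c=p_{t_1}-p_t$. (Equivalently, any successful run has at least one QPU in its own target block.) Combined with $p_{t_1}\le\sin^2(2\alpha)+\mathcal O(N^{-1/4})$ from Theorem~\ref{theorem:pr_GRK}, this gives $P_{\mathrm{hybrid}}\le1-\cos^4(2\alpha)+\mathcal O(N^{-1/4})$. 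It needs neither control of the intra-block term nor the Grover-optimality bound.

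The paper instead inserts the leading-order expressions above and obtains
\[
P_{\mathrm{hybrid}}=\sin^4\phi+2\sin^2\phi\cos^2\phi\cos^2(2\beta)\le1-\cos^4\phi,
\]
with equality only at $\cos^2(2\beta)=1$. So no $k_2\neq0$ choice beats the explicitly computed $k_2=0$ branch.

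One numerical correction: $\min_{\phi}\,\phi/[2(1-\cos^4\phi)]\approx0.5233$, attained at $\phi\approx0.807$, not $0.54$. The value $0.7835/\sqrt2$ relies on the large-$l$ approximation and is inaccurate at $l=2$. Since $0.5233>0.4879$, the theorem still follows.
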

\begin{proof}
We first consider the special case \(k_2=0\), which corresponds to the standard Grover limit of the GRK sequence. For \(l=2\), we have \(m=n/2\) and hence \(b=\sqrt{N}\). In this case,
\begin{equation}
    \pr_{x=t}(k)
    =
    \sin^2\phi ,
\end{equation}
and
\begin{equation}
    \pr_{x=t_1}(k)
    =
    \sin^2\phi
    +
    \frac{\sqrt{N}-1}{N-1}\cos^2\phi ,
\end{equation}
where \(\phi=(2k+1)\theta_1\simeq 2k\theta_1\) in the large-\(N\) limit.
Thus
\begin{equation}
    \pr_{x=t_1}(k)-\pr_{x=t}(k)
    =
    \frac{\sqrt{N}-1}{N-1}\cos^2\phi
    =
    \mathcal{O}\!\left(N^{-1/2}\right).
\end{equation}

The hybrid success probability given by Eq. \eqref{eq:E_hybrid_l_2} becomes
\begin{align}
    P_{\mathrm{hybrid},\parallel}(k)
    &=
    1-\cos^4\phi
    +
    \left(
    \frac{\sqrt{N}-1}{N-1}
    \right)^2
    \cos^4\phi  \nonumber \\
    &=
    1-\cos^4\phi+\mathcal{O}\!\left(N^{-1}\right).
\end{align}
Therefore the expected iteration number with \(k_2=0\) is
\begin{equation}
    \E_{\mathrm{hybrid},\parallel}(\phi,l=2)
    =
    \frac{k}{P_{\mathrm{hybrid},\parallel}(k)}
    \approx
    \frac{\phi\sqrt{N}}
    {2\left(1-\cos^4\phi\right)} .
\end{equation}
The minimum is given by $\phi_* \approx 0.80718$, which gives
\begin{equation}
    \E_{\mathrm{hybrid},\parallel}^{(k_2=0),\mathrm{min}}(l=2)
    \approx
    \frac{\phi_*}
    {2\left(1-\cos^4\phi_*\right)}
    \sqrt{N}
    \approx
    0.5233\sqrt{N}.
\end{equation}

Next we exam the cases with \(k_2\neq0\). The leading local rotation angle is denoted as \(\beta=k_2\theta_2\). To leading order in the large-\(N\) limit with
\(m=n/2\), the target-block success probability and the full-target success
probability take the form
\begin{subequations}
\begin{align}
    &\pr_{x=t_1}(k_1,k_2)
    =
    \sin^2\phi+\mathcal{O}\!\left(N^{-1/4}\right), \\
    &\pr_{x=t}(k_1,k_2)
    =
    \sin^2\phi\cos^2(2\beta)
    +\mathcal{O}\!\left(N^{-1/4}\right).
\end{align}
\end{subequations}
Hence the hybrid success probability satisfies
\begin{align}
P_{\mathrm{hybrid},\parallel}
&=
\sin^4\phi
+
2\sin^2\phi\cos^2\phi\cos^2(2\beta)
+O\!\left(N^{-1/4}\right)  \nonumber\\
&\le
1-\cos^4\phi
+O\!\left(N^{-1/4}\right).
\end{align}
Thus, at leading order, the success probability is maximized by
\(\cos^2(2\beta)=1\), which corresponds to the \(k_2=0\) Grover branch up to
equivalent local rotations. Nonzero local GRK steps cannot improve the leading
coefficient \(0.5233\).

On the other hand, the minimal expected iteration number of the inner parallel
scheme for \(l=2\) is
\begin{equation}
    \E^\mathrm{min}_{\mathrm{inner},\parallel}(l=2)
    \approx
    0.69\sqrt{\frac{N}{2}}
    \approx
    0.4879\sqrt{N}.
\end{equation}
Therefore, we establish $\E^\mathrm{min}_{\mathrm{hybrid},\parallel}(l=2)>\E^\mathrm{min}_{\mathrm{inner},\parallel}(l=2)$, namely the inner parallel scheme outperforms the hybrid parallel scheme at $l=2$. 
\end{proof}

%\bibliographystyle{apsrev4-2}
%\bibliography{bib}

%apsrev4-2.bst 2019-01-14 (MD) hand-edited version of apsrev4-1.bst
%Control: key (0)
%Control: author (8) initials jnrlst
%Control: editor formatted (1) identically to author
%Control: production of article title (0) allowed
%Control: page (0) single
%Control: year (1) truncated
%Control: production of eprint (0) enabled
\providecommand{\noopsort}[1]{}

\end{document}